\documentclass[runningheads]{llncs}
\usepackage[T1]{fontenc}
\usepackage{amsfonts, amsmath,amssymb}
\usepackage{hyperref,graphicx, enumitem}
\usepackage{color}

\begin{document}

\title{The Sources of Unknowability and Self-refutation in Epistemic and Dynamic Epistemic Logic}
\titlerunning{The Sources of Unknowability and Self-refutation}
\author{Eiji Yamada\inst{1}\orcidID{0009-0007-2519-8016}}

\authorrunning{E. Yamada}

\institute{University of Tsukuba, Tsukuba, Ibaraki 305-8571, Japan\\
\email{s2420142@u.tsukuba.ac.jp}}

\maketitle
\begin{abstract}
In this paper, we define a formula $\varphi$ to be unbelievable if $\Box\varphi$ is unsatisfiable, and unknowable if $\Box\varphi\land\varphi$ is unsatisfiable. We then analyze the sources of unknowability and unbelievability in different classes of frames K, KD, KD45, and S5. We first show that any unbelievable formula is a fixed point of the Moore function defined by $f(\varphi)=\varphi\land\lnot\Box\varphi.$ Our main result shows that in S5, the static notions of unknowability and unbelievability in epistemic logic, and the dynamic notions of always informativeness and eventual self-refutation in dynamic epistemic logic are all equivalent to ``Moorean phenomena.'' We also generalize the result to the multi-agent case, showing that although all unknowable formulas still manifest Moorean phenomena, new mechanisms arise due to their interactive nature. Finally, we briefly analyze whether the Brandenburger-Keisler paradox in epistemic game theory can be considered Moorean.  
\keywords{Unknowability  \and Self-refutation \and Moore's paradox \and Epistemic logic \and Dynamic epistemic logic \and Public announcement.}
\end{abstract}
\section{Introduction}
In epistemic logic, one common interpretation of the formula $\Box\varphi$ is ``an agent believes $\varphi$'' and that of $\Box\varphi\land\varphi$ is ``an agent knows $\varphi$''. Then, a natural question arises, ``is there a formula that can be true but cannot be known by some or any agents?'' Although the discussions of unknowability have been made by philosophers over the years, there seem to be few papers that formally discuss it using epistemic logic. In this paper, we define the notions of unknowability and unbelievability and analyze them in the class of frames K, KD, KD45, and S5. We define a formula $\varphi$ to be unbelievable if $\Box\varphi$ is unsatisfiable, and to be unknowable if $\Box\varphi\land\varphi$ is unsatisfiable. This definition is a little different from those in the existing literature but it is the simplest, most straightforward, and static one, allowing us to systematically characterize unknowable and unbelievable formulas.

On the other hand, Holliday \& Icard~\cite{holliday2010} analyzes unsuccessful, self-refuting, informative, learnable, and non Cartesian formulas in the context of public announcement and proves connections with ``Moorean phenomena'' in the single-agent case. These notions except for non-Cartesianness are rather dynamic ones. (\cite{holliday2010} defines $\varphi$ to be Cartesian if $\Box\varphi$ is satisfiable, so non-Cartesianess is the same as unbelievability in our sense.) The sentence $p\land\lnot\Box p$ (``$p$ is true but I do not believe $p$'') is called the \emph{Moore sentence} and the absurdity of asserting it has been discussed by philosophers as \emph{Moore's paradox} (\cite{hintikka2005,moore1942}). 

Our main result (Theorem \ref{Unknowability and unbelievability in KD45 and S5}) connects the static notions of unbelievability and unknowability in epistemic logic and the dynamic notions of always informativeness and eventual self-refutation in dynamic epistemic logic with two different views of Moorean phenomena. Although the equivalence between non-Cartesianness, always informativeness and eventual self-refutation is due to Holliday \& Icard~\cite{holliday2010}, our theorem fills the gap between their result and our notions of unknowability and unbelievability as well as two different views of Moorean phenomena: viewed either as a fixed-point of the Moore function, or a contradiction among propositional formulas and formulas that are believed/considered possible. The result is not only philosophically and linguistically interesting but also expected to give theoretical foundations for multi-agent systems, security protocols and AI: where information sharing is restricted, identifying the sources helps unravel epistemic complexities. For the relationship between epistemic logic and these fields, see~\cite{fagin1995,meyer1995}. We also extend the theorem to the multi-agent version for the static notions (Theorem~\ref{Multi-agent unknowability}).

Finally, we briefly introduce the Brandenburger-Keisler paradox in the epistemic game theory literature~\cite{abramsky2012,brandenburger2006,pacuit2007}. The Brandenburger-Keisler paradox is a contradiction that arises from considering the sentence ``Ann believes that Bob assumes that Ann believes that Bob's assumption is wrong.'' Formally, this means that the formula $\Box_a\boxplus_b D$ is unsatisfiable in KD (even if $\boxplus_b D$ is satisfiable) where $\boxplus_b$ is called the assumption operator and $D$ is a proposition letter which means ``Ann believes that Bob's assumption is wrong.'' We conclude that although the paradox is still considered a Moorean phenomenon, the formula $\boxplus_b D$ is already unbelievable to Ann in KD due to Russell-style self-reference. This is in contrast to the classic Moore sentence $p\land\lnot\Box p$, which requires not only belief consistency but also positive introspection to be unbelievable.
\section{Unknowable and unbelievable formulas in the single agent case}
In this section, we consider unknowable and unbelievable formulas in the single agent case. Hereafter, let $\mathbf{Prop}$ be a set of proposition letters.
\begin{definition}
    Define \emph{formulas} in epistemic logic by
    \begin{equation*}
        \varphi:=p\mid \lnot\varphi\mid\varphi\land\psi\mid\Box\varphi\quad(p\in\mathbf{Prop})
    \end{equation*}
\end{definition}
Other propositional connectives and the $\Diamond$ operator are defined as abbreviations. (Kripke) models $M=(W,R,V)$, frames $\mathcal{F}=(W,R)$, as well as satisfiability of a formula in a class of frames is defined as usual.
\begin{definition}
Let $\varphi$ be a formula, $M$ be a model, and $\mathcal{K}$ be a class of frames.
\begin{itemize}
    \item $\varphi$ is \emph{believed} at $w$ in $M$ iff $M,w\models\Box\varphi$
    \item $\varphi$ is \emph{believable} in $\mathcal{K}$ iff $\Box\varphi$ is satisfiable in $\mathcal{K}$
    \item $\varphi$ is \emph{known} at $w$ in $M$ iff $M,w\models\varphi\land\Box\varphi$
    \item $\varphi$ is \emph{knowable} in $\mathcal{K}$ iff $\varphi\land\Box\varphi$ is satisfiable in $\mathcal{K}$

\end{itemize}
``\emph{Unbelievable}'' means ``not believable'' and ``\emph{unknowable}'' means ``not knowable.''
\end{definition}
\begin{remark}
    Although at least KD4 is often assumed for belief, we use the word ``believe'' for $\Box\varphi$ in any class of frames, based on the interpretation that an agent believes $\varphi$ if $\varphi$ holds at any state he considers possible. This convention allows us to see the roles of belief consistency and introspections, hence our choice K, KD, KD45, and S5 (it is possible to consider other classes such as S4 (=KT4) but we omit it here since the results are expected to be similar to those for KD). By definition, ``$\varphi$ is unbelievable in $\mathcal{K}$'' means that an agent cannot believe $\varphi$ at any state in any model in $\mathcal{K}$. This means that there is no situation in which an agent believes $\varphi$, due to the axioms that characterize the class of frames $\mathcal{K}$ (e.g., belief consistency, positive or negative introspection). The term ``unbelievable'' refers to the global, logical impossibility of believing a proposition and is unrelated to the everyday sense in which something is merely psychologically hard to believe. The same remark applies to unknowability.  
    
    Traditionally, $\Diamond K\varphi$ is used in the philosophical literature to express knowability where $\Diamond$ means the more abstract ``possibility'' and $K$ means knowledge or belief. \cite{balbiani2008} defines $\varphi$ to be knowable if there exists a formula $\psi$ such that after announcing $\psi$, $K\varphi$ holds. (\cite{holliday2010} suggests the word \emph{learnable} for this notion. See Definition~\ref{learnable}.) Thus, our definition is rather simpler, static one. Satisfiability in our definition of unknowability partially corresponds to the possibility represented by the $\Diamond$. Note also that $\varphi$ is unknowable iff $\varphi\to\lnot\Box\varphi$ is valid. 
\end{remark}
The following lemma states that in KD45 and S5, the truth of ``modal atoms'' do not change before and after moving between two states.
\begin{lemma}\label{modal agreement lemma}
    Let $M=(W,R,V)$ be a KD45 (or S5) model. If $wRv$, then for any formula $\varphi$ of the form $\Box\psi$ or $\Diamond\psi$, we have 
    \begin{equation*}
        M,w\models\varphi\Longleftrightarrow M,v\models\varphi
    \end{equation*}
\end{lemma}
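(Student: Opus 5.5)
The plan is to show that in a KD45 frame, whenever $wRv$ we have $R[v]=R[w]$, where $R[x]=\{u\mid xRu\}$. Since the truth of $\Box\psi$ or $\Diamond\psi$ at a state depends only on the set of its successors (together with the valuation, which is shared), this immediately gives the claim. S5 frames are in particular KD45 frames: an equivalence relation is serial, transitive and Euclidean. So it suffices to treat the KD45 case.

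First I will prove the set equality, assuming $wRv$.

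For $R[v]\subseteq R[w]$: if $vRu$, then $wRv$ and $vRu$ give $wRu$ by transitivity.

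For $R[w]\subseteq R[v]$: if $wRu$, then $wRv$ and $wRu$ give $vRu$ by Euclideanness, since $xRy\land xRz$ implies $yRz$.

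Seriality is not needed here. It matters only in that $R[w]$ is nonempty, and that plays no role in the argument.

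Next I will conclude. By the truth definition, $M,w\models\Box\psi$ iff $M,u\models\psi$ for all $u\in R[w]$. Since $R[w]=R[v]$, this holds iff $M,u\models\psi$ for all $u\in R[v]$, i.e. iff $M,v\models\Box\psi$. For $\Diamond\psi$ I will either argue the same way with "some $u$", or use $\Diamond\psi\equiv\lnot\Box\lnot\psi$ and apply the $\Box$ case to $\lnot\psi$. No induction on $\psi$ is required, because $\psi$ is arbitrary and is evaluated at the same states in both cases.

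There is no real obstacle. The only point needing care is to use the frame conditions (transitivity and Euclideanness) rather than the axioms 4 and 5 as formulas. The correspondence between KD45 models and serial, transitive, Euclidean frames is the standard definition of the class, so I will take it as given.
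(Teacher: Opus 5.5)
Your proof is correct and is essentially the paper's argument. The paper likewise uses transitivity for one direction and Euclideanness for the other, swapping their roles for $\Diamond$. Packaging both facts as the single identity $R[w]=R[v]$ simply makes that symmetry explicit.
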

\begin{proof}
    For $\varphi=\Box\psi$, $(\Rightarrow)$ follows from transitivity and $(\Leftarrow)$ follows from Euclideanness. For $\varphi=\Diamond\psi$, the order of the properties is reversed. 
\end{proof}
\begin{lemma}\label{fixed point}
In K, KD, KD45 and S5, $\varphi$ is unknowable iff $\varphi\leftrightarrow(\varphi\land\lnot\Box\varphi)$ is valid. In K and KD, if $\varphi$ is unbelievable, then $\varphi$ is unknowable but the converse does not hold. In KD45 and S5, $\varphi$ is unbelievable iff $\varphi$ is unknowable.
\end{lemma}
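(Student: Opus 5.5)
The first claim holds in every class of frames and reduces to pure propositional reasoning. The second and third claims I would prove by comparing $\Box\varphi$ with $\varphi\land\Box\varphi$: one direction is trivial everywhere, and the other uses seriality together with Lemma~\ref{modal agreement lemma}. The only real work is finding a counterexample for K and KD.

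\emph{Fixed point characterization.} As noted in the remark above, $\varphi$ is unknowable in $\mathcal{K}$ iff $\varphi\land\Box\varphi$ is unsatisfiable in $\mathcal{K}$, i.e.\ iff $\varphi\to\lnot\Box\varphi$ is valid in $\mathcal{K}$. The formula $\varphi\leftrightarrow(\varphi\land\lnot\Box\varphi)$ is propositionally equivalent to $\varphi\to\lnot\Box\varphi$: the right-to-left implication is a tautology, and the left-to-right one amounts to $\varphi\to\lnot\Box\varphi$. Since propositionally equivalent formulas are valid in exactly the same models, this settles the first claim uniformly for K, KD, KD45 and S5.

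\emph{Unbelievable implies unknowable, but not conversely, in K and KD.} If $\varphi\land\Box\varphi$ holds at some world, then so does $\Box\varphi$. Hence unbelievable implies unknowable in every class. For the failure of the converse I take the Moore sentence $\varphi=p\land\lnot\Box p$.
\begin{itemize}
\item It is unknowable in every class: $\Box\varphi$ entails $\Box p$ by normality, while $\varphi$ entails $\lnot\Box p$.
\item It is believable in KD. Take worlds $w,v,u$ with $wRv$, $vRu$ and $uRu$, and make $p$ true at $v$ only. The frame is serial. Since $u\not\models p$, we get $v\models\lnot\Box p$, so $v\models p\land\lnot\Box p$. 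As $v$ is the only successor of $w$, we have $w\models\Box(p\land\lnot\Box p)$.
\item Every KD frame is a K frame, so the same model witnesses believability in K. Alternatively, a dead-end world satisfies $\Box\varphi$ for every $\varphi$.
\end{itemize}
The only real task here is choosing this counterexample, and it is easy once one sees that the successor $v$ must itself see a $\lnot p$ world.

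\emph{Unbelievable iff unknowable in KD45 and S5.} One direction was shown above. For the other, suppose $M,w\models\Box\varphi$ in a KD45 (or S5) model. By seriality there is $v$ with $wRv$, and then $M,v\models\varphi$. By Lemma~\ref{modal agreement lemma} applied to the modal formula $\Box\varphi$, we also have $M,v\models\Box\varphi$. Thus $\varphi\land\Box\varphi$ is satisfied at $v$, so believable implies knowable, as required. In S5 one can shortcut this step with reflexivity: $w\models\varphi$ directly.
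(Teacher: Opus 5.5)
Your proof is correct and follows essentially the same route as the paper: you reduce the first claim to validity of $\varphi\to\lnot\Box\varphi$, use the Moore sentence $p\land\lnot\Box p$ as the K/KD counterexample, and use seriality plus Lemma~\ref{modal agreement lemma} for KD45 and S5. The only cosmetic difference is your counterexample model: your three-world serial model with $wRv$, $vRu$, $uRu$ works just as well as the paper's two-world symmetric model.
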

\begin{proof}
    For the first statement, $\varphi$ is unknowable $\Longleftrightarrow$ $\varphi\land\Box\varphi$ is unsatisfiable $\Longleftrightarrow$ $\varphi\to\lnot\Box\varphi$ is valid $\Longleftrightarrow$ $\varphi\leftrightarrow (\varphi\land\lnot\Box\varphi)$ is valid. For the second statement, unbelievability clearly implies unknowability. For the converse direction, the Moore sentence $\varphi:=p\land\lnot\Box p$ is unknowable but believable in K and KD. In fact, for the KD-model $M=(W,R,V)$ defined by $W=\{w_1,w_2\}$, $R=\{(w_1,w_2), (w_2,w_1)\}$ and $V(p)=\{w_2\}$, we have $M,w_1\models\Box\varphi$, so $\varphi$ is believable. However,
        $\Box\varphi\land\varphi\equiv\Box(p\land\lnot\Box p)\land (p\land\lnot\Box p)\equiv\Box p\land\Box\lnot\Box p\land p\land\lnot\Box p\equiv\bot$, so $\varphi$ is unknowable. For the third statement, unbelievability again implies unknowability. We prove that unknowability implies unbelievability by contraposition. If $\varphi$ is believable, then $M,w\models\Box\varphi$ for some KD45 (or S5) pointed model $M,w$. By seriality, there is a $v$ such that $wRv$. Then, $M,v\models\varphi$ and by lemma \ref{modal agreement lemma}, we have $M,v\models\Box\varphi$, so $\Box\varphi\land\varphi$ is satisfiable. Thus, $\varphi$ is knowable.
\end{proof}
So, every unbelievable formula $\varphi$ is equivalent to the Moore sentence $\varphi\land\lnot\Box\varphi$ and is the component of that very sentence. In other words, every unbelievable formula is a fixed point of the \emph{Moore function} $f(\varphi):=\varphi\land\lnot\Box\varphi$.\footnote{Fixed points frequently appear anywhere self-reference is involved.} Furthermore, unknowability and being equivalent to a Moore sentence are the same.

Note that the Moore sentence $p\land\lnot\Box p$ needs only KD to be unknowable but needs KD4 to be unbelievable. In fact, it is believable in KD and KD5: for KD5, consider the KD5 model $M=(W,R,V)$ with $W=\{w_1,w_2,w_3\}$,
$R=\{(w_1,w_2), (w_2,w_2), (w_2,w_3),(w_3,w_2),(w_3,w_3)\}$ and $V(p)=\{w_2\}$. Then, $M,w_1\models\Box(p\land\lnot\Box p)$. In general, the equivalence between unknowability and unbelievability holds in any class of frames contained in KD4 as we can see from the above proof. This means that the two notions are the same under belief consistency and positive introspection.
\subsection{Unknowability and unbelievability in K and KD}
\begin{theorem}[Unbelievability in K and KD]
\begin{enumerate}
    \item Every formula is believable in K.
    \item  A formula $\varphi$ is unbelievable in KD iff $\varphi$ is logically equivalent to $\bot$ in KD (i.e., $\varphi$ is unsatisfiable in KD).
\end{enumerate}
\end{theorem}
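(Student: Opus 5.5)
For part 1, I will exhibit a single K-model in which $\Box\varphi$ holds for every formula $\varphi$. Take $M=(W,R,V)$ with $W=\{w\}$, $R=\emptyset$, and $V$ arbitrary. Since $w$ has no $R$-successors, the clause for $\Box$ makes $M,w\models\Box\varphi$ vacuously true, whatever $\varphi$ is. The frame $(W,R)$ lies in K, which imposes no conditions on frames, so $\Box\varphi$ is satisfiable in K. Hence every formula is believable in K. This step is immediate.

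For part 2, I will prove both directions of the equivalence.

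For $(\Leftarrow)$, suppose $\varphi$ is unsatisfiable in KD, and let $M,w$ be any KD pointed model. By seriality there is some $v$ with $wRv$. If $M,w\models\Box\varphi$ held, we would get $M,v\models\varphi$, contradicting the unsatisfiability of $\varphi$. So $\Box\varphi$ is unsatisfiable in KD, that is, $\varphi$ is unbelievable.

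For $(\Rightarrow)$, I argue by contraposition: assuming $\varphi$ is satisfiable in KD, say $M,v\models\varphi$ for a KD model $M=(W,R,V)$, I will build a KD model in which $\Box\varphi$ holds.
\begin{itemize}
\item Add a fresh state $w_0\notin W$.
\item Let $W'=W\cup\{w_0\}$ and $R'=R\cup\{(w_0,v)\}$, and let $V'$ agree with $V$ on $W$ (its value at $w_0$ is arbitrary).
\item Seriality is preserved: $w_0$ has the successor $v$, and the old states keep their old successors.
\end{itemize}
The key claim is that for every $u\in W$ and every formula $\psi$, we have $M,u\models\psi$ iff $M',u\models\psi$. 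This holds because no old state acquires new successors, so the submodel generated by $u$ is the same in $M$ and $M'$. Formally, I will prove it by induction on $\psi$, or cite the standard generated-submodel invariance. Then $M',v\models\varphi$, and since $v$ is the only $R'$-successor of $w_0$, we get $M',w_0\models\Box\varphi$. So $\varphi$ is believable in KD.

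The only point needing care is this invariance claim. It is routine, because $R'$ restricted to $W$ coincides with $R$, so the $\Box$ case of the induction goes through directly. Finally, unsatisfiability of $\varphi$ in KD is the same as $\varphi\leftrightarrow\bot$ being KD-valid, which gives the parenthetical reformulation in the statement.
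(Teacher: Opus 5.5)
Your proof is correct and follows the paper's argument essentially verbatim. Part 1 uses the one-point model with empty accessibility relation, and the nontrivial direction of part 2 adds a fresh root whose only successor is a state satisfying $\varphi$; you also spell out the seriality argument for $(\Leftarrow)$ and the generated-submodel invariance, both of which the paper leaves implicit.
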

\begin{proof}
    (1) Consider $M=(W,R,V)$ with $W=\{w\}$ and $R=\varnothing$. (2) $(\Leftarrow)$ is trivial. $(\Rightarrow)$ We prove the contraposition. If $\varphi$ is satisfiable in KD, $M,w\models\varphi$ for some KD model $M=(W, R, V)$. Let $M'=(W', R', V')$ be the KD model such that $W'=W\cup \{w'\}$, $R'=R\cup \{(w',w)\}$ and $V'$ is an extension of $V$ where $w'$ is a new state. Then, $M',w'\models\Box\varphi$ so $\Box\varphi$ is satisfiable in KD.
\end{proof}
Every formula is logically equivalent to a formula in disjunctive normal form in K as defined in Definition~\ref{DNF in K} (see for example, \cite{fine1975}). The notations in Definition~\ref{DNF notation} are from~\cite{holliday2010}.
\begin{definition}\label{DNF in K}
    A formula is in \emph{disjunctive normal form} (\emph{DNF}) in K iff it is a disjunction of conjunctions of the form $\delta\equiv\alpha\land\Box\beta_1\land\cdots\land\Box\beta_n\land\Diamond\gamma_1\land\cdots\land\Diamond\gamma_m$ where $\alpha$ is a conjunction of literals and $\beta_i$ and $\gamma_i$ are arbitrary formulas.
\end{definition}
\begin{definition}\label{DNF notation}
    Given $\delta\equiv\alpha\land\Box\beta_1\land\cdots\land\Box\beta_n\land\Diamond\gamma_1\land\cdots\land\Diamond\gamma_m$ in disjunctive normal form, we define $\delta^\alpha\equiv\alpha$, $\delta^{\alpha\Box}\equiv\alpha\land\Box\beta_1\land\cdots\land\Box\beta_n$, and similarly for $\delta^{\alpha\Diamond}$, $\delta^\Box$, $\delta^{\Box\Diamond}$, and $\delta^{\Diamond}$.
\end{definition}
\begin{theorem}[Unknowability in K and KD]\label{Unknowability in K and KD}
    Let $\bigvee_{k\in K}\delta_k$ be a disjunctive normal form of a formula $\varphi$ in K. Let $B_k$, $\Gamma_k$ be the sets of formulas $\beta$, $\gamma$ such that $\Box\beta$ is in $\delta_k^\Box$ and $\Gamma_k$ is in $\delta_k^\Diamond$, respectively.\footnote{That is, for $\delta_k=\alpha_k\land\Box\beta_{k1}\land\cdots\Box\beta_{kn}\land\Diamond\gamma_{k1}\land\cdots\land\Diamond\gamma_{km}$, let $B_k=\{\beta_{kj}\colon j=1,\ldots, n\}$ and $\Gamma_k=\{\gamma_{kj}\colon j=1,\ldots, m\}$.}
    \begin{enumerate}
        \item $\varphi$ is unknowable in K iff for all $k\in K$, either (i) $\delta_k^\alpha$ is unsatisfiable in K, or (ii) there exists a $\gamma\in \Gamma_k$ such that $B_k\cup\{\varphi\}\cup\{\gamma\}$ is unsatisfiable.
        \item $\varphi$ is unknowable in KD iff  for all $k\in K$, one of the following conditions hold: (i) $\delta_k^\alpha$ is unsatisfiable in KD, (ii) $B_k\cup \{\varphi\}$ is unsatisfiable in KD, or (iii) there exists a $\gamma\in \Gamma_k$ such that $B_k\cup\{\varphi\}\cup\{\gamma\}$ is unsatisfiable.
    \end{enumerate}
\end{theorem}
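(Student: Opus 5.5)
The plan is to reduce unknowability of $\varphi$ to unsatisfiability of each disjunct of $\Box\varphi\land\varphi$. Since $\varphi\equiv\bigvee_{k\in K}\delta_k$, we have
\[
\varphi\land\Box\varphi\equiv\bigvee_{k\in K}(\delta_k\land\Box\varphi).
\]
Hence $\varphi$ is unknowable iff for every $k$ the formula $\delta_k\land\Box\varphi$ is unsatisfiable. I first rewrite
\[
\delta_k\land\Box\varphi\equiv\alpha_k\land\Box\Bigl(\textstyle\bigwedge B_k\land\varphi\Bigr)\land\bigwedge_{\gamma\in\Gamma_k}\Diamond\gamma,
\]
using that $\Box$ distributes over conjunction in K. The whole argument then consists of characterizing when a single conjunction of the shape $\alpha\land\Box\beta\land\Diamond\gamma_1\land\cdots\land\Diamond\gamma_m$ is satisfiable. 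Here $\beta=\bigwedge B_k\land\varphi$.

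\textbf{Part 1 (K).} I claim that $\alpha\land\Box\beta\land\bigwedge_j\Diamond\gamma_j$ is satisfiable in K iff $\alpha$ is satisfiable and each $\beta\land\gamma_j$ is satisfiable.

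For necessity: $\alpha$ is a conjunct. If $w\models\Box\beta\land\Diamond\gamma_j$, then some successor satisfies $\beta\land\gamma_j$. For sufficiency, I construct a model:
\begin{itemize}
\item take a valuation at a fresh root $w$ making the literals $\alpha$ true (possible since $\alpha$ is a satisfiable conjunction of literals);
\item for each $j$, take a pointed K-model $(M_j,v_j)$ of $\beta\land\gamma_j$;
\item form the disjoint union of the $M_j$ and add edges $w\to v_j$.
\end{itemize}
Modal truth at the $v_j$ is preserved by generated-submodel invariance. Then $w\models\Box\beta$ because its only successors are the $v_j$, and $w\models\Diamond\gamma_j$ for each $j$.

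If $m=0$, the root has no successors, so $\Box\beta$ holds vacuously; this matches the statement, which has no condition on $B_k$ in K. Negating the claim gives exactly condition (i) or (ii) of Part 1, where unsatisfiability of $B_k\cup\{\varphi\}\cup\{\gamma\}$ means unsatisfiability of $\beta\land\gamma$. Satisfiability in K and KD coincide for the $\gamma$-condition, since any K-model can be made serial by adding a fresh reflexive dead-end world. The same fact handles the unqualified ``unsatisfiable'' in (iii) of Part 2.

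\textbf{Part 2 (KD).} The same argument works, except that seriality forces $w$ to have at least one successor. So $\Box\beta$ requires $\beta$ to be satisfiable even when $m=0$. The claim becomes: the conjunction is KD-satisfiable iff $\alpha$ is satisfiable, $\beta$ is KD-satisfiable, and each $\beta\land\gamma_j$ is KD-satisfiable. For sufficiency I add a model of $\beta$ as an extra successor of $w$. I then make everything serial: each $M_j$ is taken as a KD-model, and the root now has at least one successor. Negating gives conditions (i), (ii), (iii).

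The main obstacle is bookkeeping rather than depth. I need to check:
\begin{itemize}
\item that $\varphi$, which itself appears inside $\beta$, causes no circularity: it does not, because we only need satisfiability of $\beta\land\gamma_j$ as an arbitrary formula;
\item that the new root does not disturb seriality or the truth of formulas at the attached models, which holds by invariance under generated submodels;
\item that unsatisfiability of the propositional $\alpha$ is the same in K and KD, which is clear.
\end{itemize}
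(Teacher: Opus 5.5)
Your argument is the paper's own proof. You split $\varphi\land\Box\varphi$ into the disjuncts $\delta_k\land\Box\varphi$ and merge the boxes into $\Box(\bigwedge B_k\land\varphi)$. You then characterize satisfiability of $\alpha\land\Box\beta\land\bigwedge_j\Diamond\gamma_j$: in K, $\alpha$ and every $\beta\land\gamma_j$ must be satisfiable; in KD, $\beta$ must be as well. Your fresh-root-over-a-disjoint-union construction correctly supplies the details the paper only sketches.

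One sentence has to go, though. You claim that satisfiability in K and KD coincide for the $\gamma$-condition, because a K-model can be made serial by adding a fresh reflexive world. That is false. Attaching a successor to dead-end worlds changes the truth of modal formulas; for instance, $\Box\bot$ is K-satisfiable but KD-unsatisfiable.

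The distinction genuinely matters for Part 2. Take $\varphi:=\Diamond\Diamond\Box\bot\lor(p\land\Diamond\lnot p)$. In KD it is equivalent to the Moore sentence, so it is unknowable in KD. Now look at its first disjunct, with $\alpha=\top$, $B=\varnothing$, $\Gamma=\{\Diamond\Box\bot\}$. Condition (i) fails. Condition (ii) fails because $\varphi$ is KD-satisfiable. Yet $\{\varphi,\Diamond\Box\bot\}$ is K-satisfiable: take a $p$-world whose only successor is a $\lnot p$ dead end.

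So (iii) must be read as KD-unsatisfiability, and that is exactly what your KD lemma delivers. Drop that remark and read the unqualified ``unsatisfiable'' relative to the ambient class (K in Part 1, KD in Part 2); the proof then goes through.
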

\begin{proof}
    We only sketch the proof here. First note that $\varphi\land\Box\varphi\equiv(\bigvee_{k\in K}\delta_k)\land\Box\varphi\equiv\bigvee_{k\in K}(\delta_k\land\Box\varphi)\equiv\bigvee_{k\in K}[\delta_k^\alpha\land(\delta_k^\Box\land\Box \varphi)\land\delta_k^\Diamond]$. In general, we can prove the following lemmas for any propositional formula $\alpha$ and any finite sets of formulas $B$ and $\Gamma$: In K, $\alpha\land\Box(\bigwedge B)\land\bigwedge_{\gamma\in\Gamma}\Diamond\gamma$ is satisfiable iff $\alpha$ is satisfiable, and for every $\gamma\in\Gamma$, $B\cup\{\gamma\}$ is satisfiable. In KD, we further require that $B$ is satisfiable. The theorem then immediately follows from these lemmas.
\end{proof}
$B_k$ is the set of formulas that an agent originally believed and $\Gamma_k$ is the set of formulas that an agent originally considered possible. Thus, the unsatisfiability of $B_k\cup\{\varphi\}\cup\{\gamma\}$ in the above theorem says that if $\varphi$ is unknowable, then knowing $\varphi$ would lead to a contradiction among propositions that an agent originally believed or considered possible, and the proposition he is now trying to believe. Recall that the Moore sentence $\varphi:=p\land\lnot\Box p$ is believable in KD but unknowable in KD. This means that in lack of introspections, it is possible that an agent consistently believes $p$ while believing $\Diamond\lnot p$ when $\varphi$ is actually false. However, it is impossible to believe $\varphi$ when $\varphi$ is true, as indicated by the unsatisfiability of $B_k\cup\{\varphi\}\cup\{\gamma\}$. Therefore, this generalized view in Theorem~\ref{Unknowability in K and KD} is considered the best way to capture ``Moorean phenomena.''
\subsection{Unknowability and unbelievability in KD45 and S5}
In KD45 and S5, we can use the strong fact that every formula can be converted into a disjunctive normal form with modal depth at most one. \cite{holliday2010} defines DNF in K45 and states that every formula is logically equivalent to a DNF in K45.
\begin{definition}\label{Def of DNF}
    A formula is in \emph{disjunctive normal form} (\emph{DNF}) in K45 iff it is a disjunction of conjunctions of the form $\delta\equiv\alpha\land\Box\beta_1\land\cdots\land\Box\beta_n\land\Diamond\gamma_1\land\cdots\land\Diamond\gamma_m$ where $\alpha$ and $\gamma_i$ are conjunctions of literals and each $\beta_i$ is a disjunction of literals.
\end{definition}
For convenience, we state the reduction rules for K45, which are used when converting a formula into DNF in K45 (see \cite{georgiev2017}). 
\begin{lemma}\label{reduction rules for K45}
   In K45, the following reduction rules hold.
\[
\begin{aligned}
1.\;& \Box\Diamond A \equiv (\Diamond A \vee \Box \bot)
    && 
    & 2.\;& \Diamond\Box A \equiv (\Box A \wedge \Diamond \top)
\\
3.\;& \Box\Box A \equiv \Box A
    &&
    & 4.\;& \Diamond\Diamond A \equiv \Diamond A
\\
5.\;& \Diamond(\Diamond A \wedge B) \equiv (\Diamond A \wedge \Diamond B)
    &&
    & 6.\;& \Diamond(\Box A \wedge B) \equiv (\Box A \wedge \Diamond B)
\\
7.\;& \Box(\Diamond A \vee B) \equiv (\Diamond A \vee \Box B)
    &&
    & 8.\;& \Box(\Box A \vee B) \equiv (\Box A \vee \Box B)
\end{aligned}
\]
\end{lemma}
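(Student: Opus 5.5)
Each of the eight equivalences will be checked semantically over arbitrary K45 models, that is, models whose relation $R$ is transitive and Euclidean. Fix such a model $M=(W,R,V)$ and a state $w$. The main tool is a K45 analogue of Lemma~\ref{modal agreement lemma}. Its proof only used transitivity and Euclideanness, so it already holds in K45: if $wRv$, then $w$ and $v$ agree on every formula of the form $\Box\psi$ or $\Diamond\psi$. We also need one further K45 fact: if $wRv$, then $R(v)=R(w)$. Here $R(v)\subseteq R(w)$ follows from transitivity, and $R(w)\subseteq R(v)$ follows from Euclideanness, since $wRv$ and $wRu$ give $vRu$.

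The case analysis for each rule is whether $R(w)$ is empty or not.
\begin{itemize}
\item If $R(w)=\varnothing$, then $w\models\Box\chi$ and $w\not\models\Diamond\chi$ for every $\chi$. Rules 1 and 2 are then immediate, since $\Box\bot$ holds and $\Diamond\top$ fails. Rules 3 and 8 hold because both sides are true. Rules 4, 5 and 6 hold because both sides are false. Rule 7 holds because $\Box B$ is true.
\item If $R(w)\neq\varnothing$, then $\Box\bot$ is false and $\Diamond\top$ is true. Each rule then reduces to the statement that a modal formula $\Box A$ or $\Diamond A$ has a constant truth value on $\{w\}\cup R(w)$, and this is exactly the agreement lemma.
\end{itemize}

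Two examples show how the nonempty case goes. For rule 1, $w\models\Box\Diamond A$ holds iff every successor $v$ satisfies $\Diamond A$. By agreement, this holds iff $w\models\Diamond A$, and nonemptiness is needed to pass from ``every successor'' back to $w$. For rule 7, $\Box(\Diamond A\vee B)$ holds at $w$ iff every successor $v$ satisfies $\Diamond A\vee B$. Since $\Diamond A$ has the same truth value at every successor as at $w$, we split on $w\models\Diamond A$. If it holds, both sides of rule 7 are true. If it fails, the left side becomes $\Box B$, which is exactly the right side.

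Rules 5, 6 and 8 follow the same pattern of pulling out the constant modal conjunct or disjunct. Rules 3 and 4 also follow from rules 8 and 5 by taking $B=\bot$ and $B=\top$ respectively, together with the empty-successor case.

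The main obstacle is bookkeeping rather than substance. Whenever a modal subformula is evaluated at a successor $v$ instead of at $w$, we must check that agreement actually applies, which requires $wRv$. In the ``$\Rightarrow$'' directions of rules 1 and 2 we also need the nonemptiness of $R(w)$. This is precisely why the correction terms $\Box\bot$ and $\Diamond\top$ appear, and it marks the difference from KD45, where these terms trivialize.
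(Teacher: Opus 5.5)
Your argument is correct, and it differs from the paper because the paper does not prove this lemma at all: it records the eight rules ``for convenience'' and defers to \cite{georgiev2017}. You instead give a self-contained semantic check over transitive Euclidean frames.

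The observation that keeps it short is that the proof of Lemma~\ref{modal agreement lemma} never uses seriality. So agreement of $\Box$- and $\Diamond$-formulas along $R$ already holds in K45. Your split on whether $R(w)$ is empty then does two jobs. It disposes of the degenerate case, and it shows exactly where the correction terms $\Box\bot$ and $\Diamond\top$ in rules 1 and 2 come from, and why they trivialize in KD45.

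Your route makes the paper's normal-form machinery rest on the same agreement lemma as the rest of the paper, with no external dependency; the citation buys only brevity. Since the paper reads ``in K45'' as validity over a class of frames, a semantic verification is exactly what is required.

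Two small bookkeeping points:
\begin{enumerate}
\item For rule 2, it is the right-to-left direction, from $\Box A\wedge\Diamond\top$ to $\Diamond\Box A$, that needs a successor of $w$, and that successor is supplied by $\Diamond\top$. In the left-to-right direction, nonemptiness is given by the hypothesis $\Diamond\Box A$.
\item Obtaining rules 3 and 4 from rules 8 and 5 uses the K-validities $\Box\bot\to\Box A$ and $\Diamond A\to\Diamond\top$. This is what your phrase ``together with the empty-successor case'' amounts to.
\end{enumerate}
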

We prepare two lemmas (Lemmas \ref{Normal form of Boxvarphi in KD45} and \ref{satisfiability and modal depth zero}) for Theorem~\ref{Unknowability and unbelievability in KD45 and S5}.
\begin{lemma}\label{Normal form of Boxvarphi in KD45}
    If $\bigvee_{k\in K}\delta_k$ is a disjunctive normal form of $\varphi$ in K45, then in KD45 and S5, $\Box\varphi$ is logically equivalent to
    \begin{equation*}
        \bigvee_{S\subseteq K}\left[\bigwedge_{k\in S}\delta_k^{\Box\Diamond}\land\Box(\bigvee_{k\in S}\delta_k^{\alpha})\right]
    \end{equation*}
\end{lemma}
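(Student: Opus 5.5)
I would argue semantically, using Lemma~\ref{modal agreement lemma}. The key point is that in a KD45 (or S5) model, every conjunct of $\delta_k^{\Box\Diamond}$ has the form $\Box\beta$ or $\Diamond\gamma$. So such a conjunct has the same truth value at $w$ and at every $R$-successor of $w$. The equivalence of $\varphi$ with $\bigvee_{k\in K}\delta_k$ is valid over K45 frames, and KD45 and S5 frames are K45 frames. Hence at any point $v$ of such a model, $M,v\models\varphi$ iff $M,v\models\delta_k$ for some $k\in K$. Fix a KD45 (or S5) model $M=(W,R,V)$ and a state $w$; I show both directions of the equivalence at $w$.

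($\Rightarrow$) Suppose $M,w\models\Box\varphi$. Define
\[
S=\{k\in K : M,v\models\delta_k \text{ for some } v \text{ with } wRv\}.
\]
For $k\in S$, pick a successor $v$ witnessing $k\in S$. Then $M,v\models\delta_k^{\Box\Diamond}$, and by Lemma~\ref{modal agreement lemma}, applied conjunct by conjunct, $M,w\models\delta_k^{\Box\Diamond}$. So $w$ satisfies $\bigwedge_{k\in S}\delta_k^{\Box\Diamond}$. Moreover, every successor $v$ of $w$ satisfies $\varphi$, hence some $\delta_k$, and this $k$ lies in $S$ by definition. So $M,v\models\delta_k^\alpha$, which gives $M,w\models\Box(\bigvee_{k\in S}\delta_k^\alpha)$. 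Thus the disjunct indexed by this $S$ holds at $w$.

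($\Leftarrow$) Suppose the disjunct for some $S\subseteq K$ holds at $w$, and let $v$ be any successor of $w$. From $\Box(\bigvee_{k\in S}\delta_k^\alpha)$ we get $M,v\models\delta_k^\alpha$ for some $k\in S$. Since $M,w\models\delta_k^{\Box\Diamond}$, Lemma~\ref{modal agreement lemma} transfers each of its conjuncts to $v$. Therefore $M,v\models\delta_k^\alpha\land\delta_k^{\Box\Diamond}$, which is just $\delta_k$ up to reordering of conjuncts, so $M,v\models\varphi$. As $v$ was an arbitrary successor, $M,w\models\Box\varphi$.

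There is no serious obstacle; two points need care. First, $S=\varnothing$ is allowed. In that case the disjunct is $\top\land\Box\bot$, which is unsatisfiable in KD45 and S5 by seriality, so it contributes nothing; and in the ($\Rightarrow$) direction seriality also makes the constructed $S$ nonempty. Second, the modal agreement lemma needs both transitivity and Euclideanness, which is why the statement is restricted to KD45 and S5. Alternatively, the equivalence could be derived syntactically from the reduction rules in Lemma~\ref{reduction rules for K45}, by distributing $\Box$ over the disjunction. However, that route is messier because $\Box$ does not distribute over $\vee$, which is exactly why the disjunction over all subsets $S$ appears.
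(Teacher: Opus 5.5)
Your proof is correct and follows the paper's argument: you verify both directions semantically at an arbitrary KD45/S5 point using Lemma~\ref{modal agreement lemma}, and your $(\Leftarrow)$ direction is the same as the paper's. The only difference is the witness in $(\Rightarrow)$: you take $S$ to be the indices realized at some successor, while the paper takes $S=\{k\in K\colon M,w\models\delta_k^{\Box\Diamond}\}$, which makes $\bigwedge_{k\in S}\delta_k^{\Box\Diamond}$ immediate and uses modal agreement instead to show that every successor's index lies in $S$; both choices work.
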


\begin{proof}
    We show that $\Box(\bigvee_{k\in K}(\delta_k^{\alpha}\land\delta_k^{\Box\Diamond}))\leftrightarrow \bigvee_{S\subseteq K}\left[\bigwedge_{k\in S}\delta_k^{\Box\Diamond}\land\Box(\bigvee_{k\in S}\delta_k^{\alpha})\right]$ is valid in KD45. Let $M$ be a KD45 model and $w$ be a state in $M$. \\$(\Rightarrow)$ Suppose that $M,w\models\Box(\bigvee_{k\in K}(\delta_k^{\alpha}\land\delta_k^{\Box\Diamond}))$. Let $S=\{k\in K\colon M,w\models \delta_k^{\Box\Diamond}\}$. Then, clearly $M,w\models\bigwedge_{k\in S}\delta_k^{\Box\Diamond}$. To show $M,w\models \Box\bigvee_{k\in S}\alpha_k$, take any $v$ such that $wRv$. Then $M,v\models \bigvee_{k\in K}(\delta_k^{\alpha}\land \delta_k^{\Box\Diamond})$ by assumption, so $M,v\models\delta_k^{\alpha}\land \delta_k^{\Box\Diamond}$ for some $k\in K$. Since $M,v\models \delta_k^{\Box\Diamond}$ and $\delta_k^{\Box\Diamond}$ is purely modal, $M,w\models \delta_k^{\Box\Diamond}$ holds by lemma \ref{modal agreement lemma} so that $k$ must be in $S$. Thus, we have $M,v\models\bigvee_{k\in S}\delta_k^{\alpha}$ hence $M,w\models \Box\bigvee_{k\in S}\delta_k^\alpha$, which implies that $M,w\models\left[\bigwedge_{k\in S}\delta_k^{\Box\Diamond}\land\Box(\bigvee_{k\in S}\delta_k^{\alpha})\right]$.\\
    $(\Leftarrow)$ Suppose that $M,w\models \bigvee_{S\subseteq K}\left[\bigwedge_{k\in S}\delta_k^{\Box\Diamond}\land\Box(\bigvee_{k\in S}\delta_k^{\alpha})\right]$. Then $M,w\models \left[\bigwedge_{k\in S}\delta_k^{\Box\Diamond}\land\Box(\bigvee_{k\in S}\delta_k^{\alpha})\right]$ for some $S\subseteq K$, and in particular, $M,w\models\Box(\bigvee_{k\in S}\delta_k^\alpha)$. To show $M,w\models\Box(\bigvee_{k\in K}(\delta_k^{\alpha}\land\delta_k^{\Box\Diamond}))$, take any $v$ such that $wRv$. We then have $M,v\models(\bigvee_{k\in S}\delta_k^\alpha)$, and by Lemma \ref{modal agreement lemma}, $M,v\models \left[\bigwedge_{k\in S}\delta_k^{\Box\Diamond}\land\Box(\bigvee_{k\in S}\delta_k^{\alpha})\right]$
    . Thus, $M,v\models\delta_k^\alpha\land\delta_k^{\Box\Diamond}$ for some $k\in S$. Thus, we have $M,w\models\Box(\bigvee_{k\in K}(\delta_k^{\alpha}\land\delta_k^{\Box\Diamond}))$.
\end{proof}
\begin{lemma}\label{satisfiability and modal depth zero}
    Let $\beta$ be a propositional formula and $\Gamma$ be a finite set of propositional formulas. Then, in KD45 and S5, $\Box\beta\land \bigwedge_{\gamma\in\Gamma}\Diamond\gamma$ is satisfiable iff both of the following conditions hold:
    \begin{enumerate}
        \item $\beta$ is satisfiable.
        \item For every $\gamma\in\Gamma$, $\beta\land\gamma$ is satisfiable.
    \end{enumerate}
\end{lemma}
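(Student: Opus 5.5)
We prove the two directions separately. Necessity uses seriality and the propositional nature of $\beta$ and the $\gamma$'s. Sufficiency is a direct model construction from propositional valuations.

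\emph{($\Rightarrow$)} Suppose $M,w\models\Box\beta\land\bigwedge_{\gamma\in\Gamma}\Diamond\gamma$ for some KD45 (or S5) model $M=(W,R,V)$.
\begin{itemize}
\item By seriality there is a $v$ with $wRv$. Then $M,v\models\beta$. Since $\beta$ is propositional, its truth at $v$ depends only on $V$ restricted to $v$, so the valuation at $v$ is a propositional model of $\beta$. Hence condition 1 holds.
\item For each $\gamma\in\Gamma$, pick $u_\gamma$ with $wRu_\gamma$ and $M,u_\gamma\models\gamma$. Because $\Box\beta$ holds at $w$, also $M,u_\gamma\models\beta$. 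So $\beta\land\gamma$ is true at $u_\gamma$ and is therefore propositionally satisfiable, which gives condition 2.
\end{itemize}
This direction uses only seriality (and reflexivity already implies seriality in S5).

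\emph{($\Leftarrow$)} Assume both conditions hold.
\begin{itemize}
\item Let $\Gamma=\{\gamma_1,\dots,\gamma_m\}$. For each $i$, choose a propositional valuation $s_i$ satisfying $\beta\land\gamma_i$.
\item By condition 1, also choose a valuation $s_0$ satisfying $\beta$. This extra world is what keeps the construction serial when $\Gamma=\varnothing$.
\item Let $W=\{w_0,w_1,\dots,w_m\}$ and $R=W\times W$, and let $V$ make $w_i$ realize $s_i$.
\end{itemize}
The universal relation is reflexive, transitive and Euclidean, so $M$ is an S5 model and hence also a KD45 model. Every world satisfies $\beta$, so $M,w_0\models\Box\beta$. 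Each $w_i$ with $i\ge 1$ is accessible from $w_0$ and satisfies $\gamma_i$, so $M,w_0\models\Diamond\gamma_i$ for every $i$. This gives the required satisfying pointed model in both classes at once.

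Using the universal relation is what lets one construction serve for both S5 and KD45. One point needs care: satisfiability ``in KD45'' here means satisfiability of these propositional formulas, which is just propositional satisfiability. That identification is what makes the argument uniform, since any propositional valuation extends to a one-world reflexive model. With that settled, no real obstacle remains.
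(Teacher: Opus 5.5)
Your proof is correct and follows the paper's argument essentially verbatim. For necessity you use seriality together with $\Box\beta$; for sufficiency you build a universal-relation (hence S5 and KD45) model with one world satisfying $\beta$ and one world satisfying $\beta\land\gamma$ for each $\gamma\in\Gamma$, evaluated at the $\beta$-world, and you make explicit that KD45-satisfiability of a propositional formula is just propositional satisfiability, which the paper leaves implicit.
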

\begin{proof}
        ($\Rightarrow$) Assume $M,w\models \Box\beta\land \bigwedge_{\gamma\in\Gamma}\Diamond\gamma$ for some KD45 (or S5) model $M=(W,R,V)$ and $w\in W$. For (1), we have $M,v\models\beta$ for some $v$ with $wRv$ by seriality, so $\beta$ is satisfiable in KD45 (or S5). For (2), for every $\gamma\in\Gamma$, there is a $v_\gamma$ with $wRv_\gamma$ such that $M,v_\gamma\models\gamma$ by assumption. Thus, we have $M,v_\gamma\models\beta\land \gamma$ again by assumption. Therefore, $\beta\land\gamma$ is satisfiable in KD45 (or S5). ($\Leftarrow$) Assume (1) and (2). Then, since $\beta$ and $\gamma$ are propositional, there are valuation maps $V_\beta$, $V_{\beta\land\gamma}$ and states $v_\beta$, $v_{\beta\land\gamma}$ such that $V_\beta$ makes $\beta$ true at $v_\beta$ and $V_{\beta\land\gamma}$ makes $\beta\land\gamma$ true at $v_{\beta\land\gamma}$. Define the S5 model $M=(W,R,V)$ by letting $W=\{v_\beta\}\cup\{v_{\beta\land\gamma}\colon \gamma\in\Gamma\}$, $R=W\times W$, and $V$ be defined in the obvious manner. This $M$ satisfies $M,v_\beta\models\Box\beta\land\bigwedge_{\gamma\in\Gamma}\Diamond\gamma$, hence $\Box\beta\land\bigwedge_{\gamma\in\Gamma}\Diamond\gamma$ is satisfiable.
\end{proof}
Next, we define several notions in public announcement. For a general reference on public announcement and dynamic epistemic logic, see~\cite{vanditmarsch2008}. Definitions~\ref{relativized model}-~\ref{learnable} are from \cite{holliday2010}. 
\begin{definition}\label{relativized model}
    Let $M=(W, R, V)$ be a model and $\varphi$ be a formula. The \emph{relativization} of $M$ to $\varphi$ is the submodel $M_{|\varphi}=(W_{|\varphi}, R_{|\varphi}, V_{|\varphi})$ where $W_{|\varphi}=\{w\in W\colon M,w\models\varphi\}$, $R_{|\varphi}=R\cap W_{|\varphi}$, and $V_{|\varphi}(p)=V(p)\cap W_{|\varphi}$. 
\end{definition}
Informally, $\varphi$ is successful if $\varphi$ remains true after $\varphi$ is announced, and $\varphi$ is self-refuting if $\varphi$ becomes false after $\varphi$ is announced. 
\begin{definition}\label{successful and self-refuting}
    A formula $\varphi$ is \emph{successful} iff for every pointed model, $M,w\models\varphi\land\Diamond\varphi$ implies $M_{|\varphi},w\models\varphi$. A formula $\varphi$ is \emph{unsuccessful} iff it is not successful. A formula is \emph{self-refuting} iff for every pointed model, $M,w\models\varphi\land \Diamond\varphi$ implies $M_{|\varphi} ,w\not\models\varphi$.
\end{definition}
The precondition $M,w\models\varphi\land\Diamond\varphi$ is to ensure that $\varphi$ is true not only at $w$ but also at some accessible state so that $M_{|\varphi}$ remains KD45 provided $M$ is. However, we can replace it with $M,w\models\varphi$ since we consider only S5 for dynamic notions in this paper.

Informally, $\varphi$ is informative if an agent's accessibility relation changes after $\varphi$ is announced. $\varphi$ is eventually self-refuting if $\varphi$ becomes false after $\varphi$ is announced finitely many times.
\begin{definition}\label{informative}
    A formula $\varphi$ is \emph{informative} iff there is a pointed model such that $M,w\models\varphi$ and $M_{|\varphi}\neq M$. Otherwise $\varphi$ is \emph{uninformative}. A formula $\varphi$ is \emph{always informative} iff for all pointed models such that $M,w\models\varphi$, $M_{|\varphi}\neq M$.
\end{definition}
\begin{definition}\label{eventually self-refuting}
    Given a model $M$, we define $M_{|^n\varphi}$ recursively by $M_{|^0\varphi}=M$, $M_{|^{n+1}\varphi}=(M_{|^n\varphi})_{|\varphi}$. A formula $\varphi$ is \emph{self-refuting within} $n$ \emph{steps} iff for all pointed models, if $M,w\models\varphi$, then $M_{|^m\varphi},w\not\models\varphi$ for some $m\leq n$; $\varphi$ is \emph{eventually self-refuting} iff for all pointed models, if $M,w\models\varphi$, then there is an $n$ such that $M_{|^n\varphi},w\not\models\varphi$.
\end{definition}
We state the definition of learnability for comparison. 
\begin{definition}\label{learnable}
    A formula $\varphi$ is \emph{learnable} iff for all pointed models, if $M,w\models\varphi$, then there is some $\psi$ such that $M_{|\psi},w\models\Box\varphi$.
\end{definition}
Finally, we state our main result, which connects the static notions of unbelievability and unknowability in epistemic logic, and the dynamic notions of always informativeness and eventual self-refutation in dynamic epistemic logic.
\begin{theorem}[Unknowability and unbelievability in KD45 and S5]\label{Unknowability and unbelievability in KD45 and S5}
    Let $\bigvee_{k\in K}\delta_k$ be a disjunctive normal form of $\varphi$ in K45, and for each $S\subseteq K$, let $B_S$, $\Gamma_S$ be the sets of formulas $\beta$, $\gamma$ such that $\Box\beta$, $\Diamond\gamma$ are in the formula $\bigwedge_{k\in S}\delta_k^{\Box\Diamond}$, respectively. Then in KD45 and S5, the following are equivalent:
    \begin{enumerate}
        \item $\varphi$ is unbelievable.
        \item $\varphi$ is unknowable.
        \item $\varphi\leftrightarrow(\varphi\land\lnot\Box\varphi)$ is valid (i.e., $\varphi$ is a fixed point of the Moore function).
        \item For any $S\subseteq K$, (i) $B_S\cup \{\bigvee_{k\in S}\delta_k^{\alpha}\}$ is unsatisfiable or (ii) there exists a $\gamma\in \Gamma_S$ such that $B_S\cup\{\bigvee_{k\in S}\delta_k^{\alpha}\}\cup\{\gamma\}$ is unsatisfiable.
        \end{enumerate}
        In S5, the following condition is also equivalent to the above conditions.
        \begin{enumerate}
        \item[5.] $\varphi$ is always informative.
        \item[6.] $\varphi$ is eventually self-refuting.
        \end{enumerate}
\end{theorem}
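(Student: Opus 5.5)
The equivalences $1\Leftrightarrow 2\Leftrightarrow 3$ are already contained in Lemma~\ref{fixed point}. In KD45 and S5, unbelievability and unknowability coincide by the third statement of that lemma. Unknowability is equivalent to the validity of $\varphi\leftrightarrow(\varphi\land\lnot\Box\varphi)$ by its first statement. The new work is $1\Leftrightarrow 4$, together with the dynamic conditions 5 and 6 in S5.

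For $1\Leftrightarrow 4$ I would use Lemma~\ref{Normal form of Boxvarphi in KD45}. It says that $\Box\varphi$ is unsatisfiable iff, for every $S\subseteq K$, the disjunct $\bigwedge_{k\in S}\delta_k^{\Box\Diamond}\land\Box(\bigvee_{k\in S}\delta_k^\alpha)$ is unsatisfiable. Collecting boxes, this disjunct is equivalent to
\[
\Box\Bigl(\bigwedge B_S\land\bigvee_{k\in S}\delta_k^\alpha\Bigr)\land\bigwedge_{\gamma\in\Gamma_S}\Diamond\gamma .
\]
By Definition~\ref{Def of DNF}, the formula $\beta:=\bigwedge B_S\land\bigvee_{k\in S}\delta_k^\alpha$ is propositional, and every $\gamma\in\Gamma_S$ is a conjunction of literals. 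Lemma~\ref{satisfiability and modal depth zero} therefore applies. It gives that the disjunct is unsatisfiable iff either $\beta$ is unsatisfiable, which is (i), or $\beta\land\gamma$ is unsatisfiable for some $\gamma\in\Gamma_S$, which is (ii). Since everything involved is propositional, satisfiability here is the same in KD45, in S5, and classically.

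For the S5 clauses I would invoke Holliday \& Icard's theorem. It states that in S5 non-Cartesianness, always informativeness and eventual self-refutation are equivalent. Non-Cartesianness is exactly unbelievability, so 1, 5 and 6 are equivalent. If a self-contained argument is wanted, I would proceed as follows.

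For $5\Rightarrow 1$: suppose $\Box\varphi$ holds at $w$ in some S5 model. Restrict to the equivalence class of $w$. There $\varphi$ holds everywhere, so announcing $\varphi$ leaves the model unchanged. Hence $\varphi$ is not always informative.

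For $1\Rightarrow 6$: let $M,w\models\varphi$ with $M$ an S5 model. The sequence $M_{|^n\varphi}$ is decreasing. I would work in the generated submodel of $w$, and make it finite through the modal-depth-one normal form: only finitely many propositional types matter, so after filtration the class of $w$ is finite. The sequence therefore stabilizes at some $M_{|^N\varphi}$ where $\varphi$ holds at every remaining world. If $w$ survived, $\Box\varphi$ would hold at $w$, contradicting unbelievability. So at some stage $w$ is removed, and just before that $\varphi$ failed at $w$.

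For $6\Rightarrow 5$: suppose announcing $\varphi$ at $M,w$ left $M$ unchanged. Then every further announcement also leaves it unchanged, so $\varphi$ stays true at $w$ forever. That contradicts eventual self-refutation.

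The main obstacle I expect is the finiteness and stabilization step in $1\Rightarrow 6$. For infinite models, the $\varphi$-worlds could in principle shrink forever without $w$ dropping out. I would handle this with the depth-one normal form: after one announcement, whether $\varphi$ holds at a world depends only on its propositional type and on the set of types present in its class. So the sequence can change only finitely many times.
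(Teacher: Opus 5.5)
Your proposal is correct and follows the paper's own proof: $1\Leftrightarrow 2\Leftrightarrow 3$ comes from Lemma~\ref{fixed point}, $1\Leftrightarrow 4$ comes from applying Lemma~\ref{satisfiability and modal depth zero} to each disjunct given by Lemma~\ref{Normal form of Boxvarphi in KD45}, and the S5 clauses come from citing Proposition 5.7 of Holliday \& Icard. Your optional self-contained S5 argument is also sound: its one step not proved anywhere in the paper is $1\Rightarrow 6$, which you handle correctly by noting that within an S5 cluster the truth of $\varphi$ depends only on a world's propositional type and the finite set of types present, so the type sets shrink and stabilize after finitely many announcements.
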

\begin{proof}
    $(1)\Leftrightarrow (2)\Leftrightarrow (3)$ is Lemma~\ref{fixed point}. $(1)\Leftrightarrow (4)$ follows from Lemmas~\ref{Normal form of Boxvarphi in KD45} and \ref{satisfiability and modal depth zero}. $(1) \Leftrightarrow (5) \Leftrightarrow (6)$ in S5 is exactly Proposition 5.7 in \cite{holliday2010}.\footnote{\cite{holliday2010} used the term not Cartesian in Proposition 5.7 and found the term unknowable ``seems more natural'' for that notion. The definitions of unknowability and unbelievability, as well as their connection to Moorean phenomena presented in this paper, were developed independently before the author encountered Holliday \& Icard~\cite{holliday2010}.}
\end{proof}
\begin{remark}
We briefly remark on the results in other classes of frames. First, as we mentioned after the proof of Lemma~\ref{fixed point}, $(1)\Leftrightarrow (2)\Leftrightarrow (3)$ holds in any class of frames contained in KD4. Also, $(1)\Rightarrow (5)$ holds in any class of frames: In fact, suppose that $\varphi$ is unbelievable and take any model $M,w$. Since $\Box\varphi$ is unsatisfiable, $M,w\not\models\Box\varphi$ so that there is a $v\in M$ such that $wRv$ and $M,v\models\lnot\varphi$. Then, $v\in M$ but $v\notin M_{|\varphi}$ so $M_{|\varphi}\neq M$ hence $\varphi$ is always informative. Furthermore, $(6)\Rightarrow (5)$ holds in any class of frames: In fact, to show the contraposition, suppose that $\varphi$ is not always informative. Then, there is a model $M,w$ such that $M,w\models\varphi$ and $M_{|\varphi}=M$, so $\varphi$ cannot be eventually self-refuting. Also, $(5)\Rightarrow (1)$ actually holds in KD45 as well since the proof of this direction in Proposition 5.7 in Holliday \& Icard~\cite{holliday2010} uses only KD45 properties. Finally, $(6)\Rightarrow (1)$ holds in any class of frames contained in KD4: In fact, to show the contraposition, suppose that $\varphi$ is believable in KD4. Then, $M,w\models\Box\varphi$ for some KD4 model $M,w$, so by seriality there is a $v$ such that $wRv$ and $M,v\models\varphi$. By transitivity, we also have $M,v\models\Box\varphi$. Let $M_v$ be the submodel of $M$ generated by $v$. Then, we have $M_v,v\models\varphi$ and $M_v,v\models\Box\varphi$ hence $M_v=M_v|_{\varphi}$. Thus, $\varphi$ is not eventually self-refuting.  

In summary, we have $(1)\Leftrightarrow (2)\Leftrightarrow (3)\Rightarrow (5)\Leftarrow (6)\Rightarrow (1)$ in KD4 and S4 and $(1)\Leftrightarrow (2)\Leftrightarrow (3)\Leftrightarrow (4)\Leftrightarrow (5)\Leftarrow (6)\Rightarrow (1)$ in KD45. The remaining directions we leave to future work (some of them are expected to fail).
\end{remark}
In KD45 and S5, the formula $\varphi$ in the set $B_k\cup\{\varphi\}\cup\Gamma_k$ in Theorem~\ref{Unknowability in K and KD} simplifies to $\bigvee_{k\in S}\delta_k^\alpha$ thanks to introspections, allowing us to consider objective (purely propositional) facts only. Also, asserting $p\land\lnot\Box p$ ($p$ but I don't believe $p$) and hearing the announcement of $p\land\lnot\Box p$ ($p$ but you don't know $p$) only differ in the subjects (I/you), so it indeed makes sense that the static notions and the dynamic notions in Theorem~\ref{Unknowability and unbelievability in KD45 and S5} are equivalent.
\begin{example}
    The Moore sentence $\varphi:=p\land\lnot\Box p$ is unbelievable in KD45 and S5. In fact, it is equivalent to $p\land\Diamond\lnot p$, and for $B=\varnothing$ and $\Gamma=\{\lnot p\}$, $B\cup \{p\}\cup \{\lnot p\}$ is unsatisfiable. However, the disjunction of the two Moore sentences $\psi:=(p\land\lnot \Box p)\lor (q\land\lnot\Box q)$ is believable in KD45 and S5. In fact, for $B=\varnothing$ and $\Gamma=\{\lnot p, \lnot q\}$, $B\cup\{p\lor q\}$ is satisfiable and both $B\cup\{p\lor q\}\cup \{\lnot p\}$ and $B\cup\{p\lor q\}\cup\{\lnot q\}$ are satisfiable. Thus, $\psi$ is believable.
\end{example}
\section{Unknowability and unbelievability in the multi-agent case}
Next, we consider unknowable and unbelievable formulas in multi-agent settings.
\begin{definition}Define \emph{formulas} in multi-agent epistemic logic by
    \begin{equation*}
        \varphi:=p\mid \lnot\varphi\mid \varphi\land\psi\mid \Box_i\varphi\quad (p\in\mathbf{Prop} \text{ and } i\in G)
    \end{equation*}
\end{definition}
\begin{definition}
    Let $\varphi$ be a formula, $i\in G$ be an agent, $M$ be a model, and $\mathcal{K}$ be a class of frames.
\begin{itemize}
    \item $\varphi$ is \emph{believed} by $i$ at $w$ in $M$ iff $M,w\models\Box_i\varphi$
    \item $\varphi$ is \emph{believable} to $i$ in $\mathcal{K}$ iff $\Box_i\varphi$ is satisfiable in $\mathcal{K}$
    \item $\varphi$ is \emph{known} to $i$ at $w$ in $M$ iff $M,w\models\varphi\land\Box_i\varphi$
    \item $\varphi$ is \emph{knowable} to $i$ in $\mathcal{K}$ iff $\varphi\land\Box_i\varphi$ is satisfiable in $\mathcal{K}$
\end{itemize}
``\emph{Unbelievable}'' means ``not believable'' and ``\emph{unknowable}'' means ``not knowable.''
\end{definition}
Next, we introduce the notions of $i$-objective formulas, $i$-independent formulas, and $i$-DNF, which are needed for the multi-agent case.
\begin{definition}
    A formula is \emph{$i$-objective} iff all of its top modal operators are neither $\Box_i$ nor $\Diamond_i$. A formula is $i$-\emph{independent} iff it contains neither $\Box_i$ nor $\Diamond_i$. 
    A formula is in \emph{disjunctive normal form} for $i$ (\emph{$i$-DNF}) in K45 iff it is a disjunction of conjunctions of the form $\delta\equiv\alpha\land\Box_i\beta_1\land\cdots\land\Box_i\beta_n\land\Diamond_i\gamma_1\land\cdots\land\Diamond_i\gamma_m$ where $\alpha$ and each $\gamma_i$ are conjunctions of $i$-objective formulas and each $\beta_i$ is a disjunction of $i$-objective formulas.
\end{definition}
For example, $\varphi:=\Box_1\Diamond_2 p\land \Diamond_3 q$ is 2-objective since the top modal operators are $\Box_1$ and $\Diamond_3$, not $\Box_2$ or $\Diamond_2$. $\varphi$ is not $i$-independent for all $i\in \{1,2,3\}$.

The multi-agent version of Lemmas~\ref{modal agreement lemma} and \ref{Normal form of Boxvarphi in KD45} also hold.
\begin{lemma}\label{modal agreement lemma multi-agent}
    Let $M=(W,\{R_i\}_{i\in G},V)$ be a KD45 (or S5) model. If $wR_iv$, then for any formula $\varphi$ of the form $\Box_i\psi$ or $\Diamond_i\psi$, we have 
    \begin{equation*}
        M,w\models\varphi\Longleftrightarrow M,v\models\varphi
    \end{equation*}
\end{lemma}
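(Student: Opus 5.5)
The proof follows the single-agent Lemma~\ref{modal agreement lemma} verbatim, with $R$ replaced by $R_i$. In a KD45 (or S5) model each relation $R_i$ is transitive and Euclidean; in S5 it is an equivalence relation, hence both. Only these two properties of $R_i$ are used. The other relations $R_j$ for $j\neq i$ play no role, since the truth of $\Box_i\psi$ and $\Diamond_i\psi$ at a state depends only on the $R_i$-successors of that state. We fix $w,v$ with $wR_iv$ and treat the two forms of $\varphi$ separately.

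For $\varphi=\Box_i\psi$, direction $(\Rightarrow)$: assume $M,w\models\Box_i\psi$ and let $vR_iu$. Transitivity gives $wR_iu$, so $M,u\models\psi$. Hence $M,v\models\Box_i\psi$. Direction $(\Leftarrow)$: assume $M,v\models\Box_i\psi$ and let $wR_iu$. From $wR_iv$ and $wR_iu$, Euclideanness gives $vR_iu$, so $M,u\models\psi$. Hence $M,w\models\Box_i\psi$.

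For $\varphi=\Diamond_i\psi$, the roles of the two properties swap. Direction $(\Rightarrow)$: if $wR_iu$ with $M,u\models\psi$, Euclideanness gives $vR_iu$, so $M,v\models\Diamond_i\psi$. Direction $(\Leftarrow)$: if $vR_iu$ with $M,u\models\psi$, transitivity gives $wR_iu$, so $M,w\models\Diamond_i\psi$. Alternatively, this case follows from the first by duality, applying it to $\Box_i\lnot\psi$ and negating both sides.

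No step is genuinely hard. The only point to keep in view is that $\psi$ is arbitrary and may contain operators of other agents. This causes no difficulty, because the argument never inspects $\psi$: it only transfers membership of the $R_i$-successor sets between $w$ and $v$, and in fact shows that $w$ and $v$ have the same $R_i$-successors.
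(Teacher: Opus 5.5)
Your proof is correct and follows the paper's approach: the paper gives no separate proof of this lemma, treating it as the multi-agent analogue of Lemma~\ref{modal agreement lemma}, whose proof is your transitivity/Euclideanness argument with the roles of the two properties swapped for $\Diamond_i$. Your remark that $w$ and $v$ have exactly the same $R_i$-successors is a correct one-line summary of why both cases hold.
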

\begin{lemma}\label{DNF of Box Multi-agent}
    Let $i \in G$ be any agent and $\bigvee_k \delta_k$ be a disjunctive normal form of $\varphi$ for $i$. Then in KD45 and S5, $\Box_i\varphi$ is logically equivalent to
    \begin{equation*}
        \bigvee_{S\subseteq K}\left[\bigwedge_{k\in S}\delta_k^{\Box_i\Diamond_i}\land\Box_i(\bigvee_{k\in S}\delta_k^{\alpha})\right]
    \end{equation*}
\end{lemma}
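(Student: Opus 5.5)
The proof will follow the single-agent proof of Lemma~\ref{Normal form of Boxvarphi in KD45} almost word for word, with $\Box$ replaced by $\Box_i$, $R$ by $R_i$, and Lemma~\ref{modal agreement lemma} by Lemma~\ref{modal agreement lemma multi-agent}. The first step is a reduction. Each disjunct has the form $\delta_k\equiv\delta_k^\alpha\land\delta_k^{\Box_i\Diamond_i}$, so $\Box_i\varphi\equiv\Box_i\bigl(\bigvee_{k\in K}(\delta_k^{\alpha}\land\delta_k^{\Box_i\Diamond_i})\bigr)$. It therefore suffices to show that
\[
\Box_i\Bigl(\bigvee_{k\in K}(\delta_k^{\alpha}\land\delta_k^{\Box_i\Diamond_i})\Bigr)\leftrightarrow\bigvee_{S\subseteq K}\Bigl[\bigwedge_{k\in S}\delta_k^{\Box_i\Diamond_i}\land\Box_i\Bigl(\bigvee_{k\in S}\delta_k^{\alpha}\Bigr)\Bigr]
\]
is valid in every KD45 (and hence every S5) multi-agent model $M=(W,\{R_j\}_{j\in G},V)$. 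The key fact is that $\delta_k^{\Box_i\Diamond_i}$ is a conjunction of formulas of the form $\Box_i\beta$ and $\Diamond_i\gamma$. By Lemma~\ref{modal agreement lemma multi-agent}, its truth value is therefore constant along $R_i$-edges. The $\beta$'s and $\gamma$'s being only $i$-objective, rather than propositional, plays no role here.

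For $(\Rightarrow)$, I fix $w$ with $M,w\models\Box_i\bigvee_k(\delta_k^\alpha\land\delta_k^{\Box_i\Diamond_i})$ and put $S=\{k\in K\colon M,w\models\delta_k^{\Box_i\Diamond_i}\}$. Then $\bigwedge_{k\in S}\delta_k^{\Box_i\Diamond_i}$ holds at $w$ by construction. Now take any $v$ with $wR_iv$. By assumption some $k$ has $M,v\models\delta_k^\alpha\land\delta_k^{\Box_i\Diamond_i}$. Transferring $\delta_k^{\Box_i\Diamond_i}$ from $v$ back to $w$ via Lemma~\ref{modal agreement lemma multi-agent} gives $k\in S$, so $v\models\bigvee_{k\in S}\delta_k^\alpha$.

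For $(\Leftarrow)$, I fix $S$ witnessing the right-hand side at $w$ and take any $v$ with $wR_iv$. Then $v\models\delta_k^\alpha$ for some $k\in S$. Since $w\models\delta_k^{\Box_i\Diamond_i}$, the same lemma gives $v\models\delta_k^{\Box_i\Diamond_i}$, hence $v\models\delta_k$.

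The main point to check is that Lemma~\ref{modal agreement lemma multi-agent} itself holds. For $\Box_i\psi$ the forward direction uses transitivity of $R_i$ and the backward direction uses Euclideanness; for $\Diamond_i\psi$ the roles are swapped. The other agents' relations never enter the argument, so KD45 or S5 properties of $R_i$ alone suffice. I should also note that the degenerate case $S=\varnothing$ is harmless. The empty conjunction is $\top$ and the empty disjunction is $\bot$, so that disjunct reads $\Box_i\bot$, and in KD45 seriality makes it false. This matches the left-hand side exactly as in the single-agent case: if $\Box_i\varphi$ holds at $w$, seriality gives an $R_i$-successor $v$ satisfying some $\delta_k$, so the $S$ chosen in $(\Rightarrow)$ is nonempty.
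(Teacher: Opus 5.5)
Your proposal is correct and is exactly the argument the paper intends: the paper gives no separate proof, only asserting that the single-agent proof of Lemma~\ref{Normal form of Boxvarphi in KD45} carries over with $\Box_i$, $R_i$ and Lemma~\ref{modal agreement lemma multi-agent}, which is what you do. Your check that the $i$-objective (rather than propositional) form of $\delta_k^\alpha$ and of the $\beta$'s and $\gamma$'s plays no role is the right one. The $S=\varnothing$ remark is harmless but not needed, since the equivalence holds even without seriality.
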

\begin{lemma}\label{multi-agent satisfiability lem}
     Let $\beta$ be an $i$-independent formula and $\Gamma$ be a finite set of $i$-independent formulas. Then, in KD45 and S5, $\Box_i\beta\land \bigwedge_{\gamma\in\Gamma}\Diamond_i\gamma$ is satisfiable iff both of the following conditions hold:
    \begin{enumerate}
        \item $\beta$ is satisfiable.
        \item For every $\gamma\in\Gamma$, $\beta\land\gamma$ is satisfiable.
    \end{enumerate}
\end{lemma}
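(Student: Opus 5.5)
The proof follows the single-agent Lemma~\ref{satisfiability and modal depth zero}. The only new ingredient replaces the step ``$\beta$ and $\gamma$ are propositional, so we may pick valuations freely''. Here we use instead that an $i$-independent formula is blind to the relation $R_i$, so $R_i$ can be rewired freely without changing the truth of $\beta$ or any $\gamma$.

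($\Rightarrow$) This direction is verbatim the single-agent argument. Suppose $M,w\models\Box_i\beta\land\bigwedge_{\gamma\in\Gamma}\Diamond_i\gamma$ in a KD45 (or S5) model. Seriality of $R_i$ gives some $v$ with $wR_iv$, and then $M,v\models\beta$, so (1) holds. For each $\gamma\in\Gamma$ there is a $v_\gamma$ with $wR_iv_\gamma$ and $M,v_\gamma\models\gamma$; since also $M,v_\gamma\models\beta$, we get (2).

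($\Leftarrow$) Assume (1) and (2). Choose a KD45 (resp.\ S5) pointed model $M_0,u_0$ satisfying $\beta$. For each $\gamma\in\Gamma$, choose a pointed model $M_\gamma,u_\gamma$ of the same class satisfying $\beta\land\gamma$.
\begin{itemize}
\item Let $N$ be the disjoint union of these finitely many models.
\item For every agent $j\neq i$, keep $R_j$ as the union of the component relations. A disjoint union of serial, transitive, Euclidean (resp.\ equivalence) relations is again of that kind.
\item Replace $R_i$ by the relation $R_i'$ defined as follows. The set $D=\{u_0\}\cup\{u_\gamma\colon\gamma\in\Gamma\}$ is one $R_i'$-cluster, i.e.\ $D\times D\subseteq R_i'$, and every other world $x$ satisfies only $xR_i'x$.
\end{itemize}
Then $R_i'$ is an equivalence relation, hence also serial, transitive and Euclidean. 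So the resulting model $N'$ lies in S5 and in KD45.

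The key step is an invariance claim. For every $i$-independent formula $\chi$ and every world $x$ of a component $M_c$, we have $N',x\models\chi$ iff $M_c,x\models\chi$. I would prove this by induction on $\chi$. The propositional cases are immediate. In the case $\Box_j\chi'$ with $j\neq i$, the $R_j$-successors of $x$ in $N'$ are exactly its successors in $M_c$, so the induction hypothesis applies. The case $\Box_i$ never arises, because $\chi$ is $i$-independent. This is the step I expect to need the most care, though it is just the standard disjoint-union invariance combined with the observation that $R_i$ is never consulted.

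With the claim in hand, every world in $D$ satisfies $\beta$ in $N'$, and each $u_\gamma$ satisfies $\gamma$. The $R_i'$-successors of $u_0$ are exactly $D$. Hence $N',u_0\models\Box_i\beta\land\bigwedge_{\gamma\in\Gamma}\Diamond_i\gamma$, which proves satisfiability in both KD45 and S5.
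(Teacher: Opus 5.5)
Your proof is correct and follows the paper's strategy: take the disjoint union of witness models for $\beta$ and for each $\beta\land\gamma$, rewire only the $i$-relation, and use that $i$-independent formulas are insensitive to $R_i$. The paper only asserts this invariance, whereas you prove it by induction. The one real difference is that the paper adds a fresh root $w^*$ with $R_i^*$-arrows into the witness set and evaluates there, which gives only a KD45 model because $w^*$ is not $R_i^*$-reflexive; by making $D$ an $R_i'$-equivalence class and evaluating at $u_0$ (as the paper itself does in the single-agent Lemma~\ref{satisfiability and modal depth zero}), you also cover the S5 case directly.
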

\begin{proof}
$(\Rightarrow)$ is the same as Lemma~\ref{satisfiability and modal depth zero}. $(\Leftarrow)$ Suppose (1) and (2). Then, there exists a model $M,s$ such that $M,s\models \beta$, and for every $\gamma\in \Gamma$ there exist models $M_\gamma, s_\gamma$ such that $M_\gamma, s_\gamma\models\beta\land\gamma$ (we assume that these models are disjoint). Let $M_{\text{all}}=(W, \{R_j\}_{j\in G}, V)$ be the disjoint union of all of these models. Let $S=\{s_\beta\}\cup \{s_\gamma\}_{\gamma\in\Gamma}$. We define the rooted KD45 model $M^*=(W^*, \{R_j^*\}_{j\in G}, V^*)$ as follows. First, let $W^*=W\cup\{w^*\}$ and $V^*$ be an extension of $V$. Define the relations by $R_i^*=\{(w^*,s)\colon s\in S\}\cup (S\times S)\cup \{(x,x)\colon x\in W\backslash S\}$ and $R_j^*=R_j\cup\{(w^*,w^*)\}\, (j\neq i)$. Then, since the truth value of an $i$-independent formula does not depend on the structure of $R_i$, we have $M^*, w^*\models\Box_i\beta\land \bigwedge_{\gamma\in\Gamma}\Diamond_i\gamma$.
\end{proof}
\begin{lemma}
    In KD45 and S5, every formula is logically equivalent to a formula in disjunctive normal form for $i$ in K45.
\end{lemma}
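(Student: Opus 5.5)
We argue by induction on the structure of $\varphi$, using the K45 reduction rules of Lemma~\ref{reduction rules for K45} read with $\Box_i,\Diamond_i$ in place of $\Box,\Diamond$. These rules are valid for each agent $i$ separately in KD45 and S5, since they only use transitivity and Euclideanness of $R_i$ and hold in K45. Instead of the normal form itself we carry a stronger invariant: every formula is equivalent to a Boolean combination of $i$-objective formulas and formulas of the form $\Box_i\beta$, where $\beta$ is itself a Boolean combination of $i$-objective formulas. Call such $\beta$ \emph{$i$-flat}. Once this holds, a propositional DNF computation, treating each $i$-objective formula and each $\Box_i\beta$ as an atom, yields the required shape.

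The induction runs as follows.
\begin{enumerate}
\item Proposition letters and formulas $\Box_j\psi$ with $j\neq i$ are $i$-objective, so they are atoms.
\item The cases $\lnot$ and $\land$ are immediate, because the invariant is closed under Boolean combinations.
\item The main case is $\Box_i\psi$. By the induction hypothesis, $\psi$ is a Boolean combination of $i$-objective atoms $o$ and modal atoms $\Box_i\beta$. Put $\psi$ into conjunctive normal form with respect to these atoms, so each clause has the form
\[
o_1\lor\cdots\lor o_r\lor\Box_i\beta_1\lor\cdots\lor\Diamond_i\lnot\beta'_1\lor\cdots.
\]
Since $\Box_i$ distributes over conjunction, it suffices to treat $\Box_i$ of a single clause. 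Rules 7 and 8 (with rule 8 iterated) pull the $\Box_i$ and $\Diamond_i$ disjuncts outside, giving
\[
\Box_i(o_1\lor\cdots\lor o_r)\lor\Box_i\beta_1\lor\cdots\lor\Diamond_i\lnot\beta'_1\lor\cdots.
\]
Each $\beta_j$ and $\lnot\beta'_j$ is $i$-flat, and $o_1\lor\cdots\lor o_r$ is $i$-flat, so the invariant is preserved.
\end{enumerate}

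For the final conversion, write $\varphi$ in propositional DNF over its atoms. Each disjunct is then a conjunction of $i$-objective literals (the $\alpha$ part), formulas $\Box_i\beta$, and formulas $\Diamond_i\gamma$ (which are $\lnot\Box_i\lnot\gamma$), all with $\beta,\gamma$ $i$-flat.

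Each $\Box_i\beta$ is then brought into the required form as follows.
\begin{enumerate}
\item Put $\beta$ into CNF over $i$-objective formulas, $\beta\equiv\bigwedge_l\beta_l$.
\item Split $\Box_i\beta$ into $\bigwedge_l\Box_i\beta_l$, where each $\beta_l$ is a disjunction of $i$-objective formulas, as the definition of $i$-DNF requires.
\end{enumerate}

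Each $\Diamond_i\gamma$ is treated dually.
\begin{enumerate}
\item Put $\gamma$ into DNF, $\gamma\equiv\bigvee_l\gamma_l$, with each $\gamma_l$ a conjunction of $i$-objective formulas.
\item Distribute $\Diamond_i$ over the disjunction, giving $\bigvee_l\Diamond_i\gamma_l$.
\item Redistribute the outer conjunction over the new disjunctions to obtain a disjunction of conjunctions $\delta$ of the required form.
\end{enumerate}

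Note that an $i$-objective formula may contain nested $\Box_i$ operators below some $\Box_j$ with $j\neq i$. This causes no problem, because $i$-objectivity constrains only the top operators, so no recursion into such subformulas is needed.

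The main obstacle is the $\Box_i$ step: verifying that rules 7 and 8 suffice to flatten a clause containing several modal disjuncts at once. This needs a small sub-induction on the number of modal disjuncts. One rewrites $\Box_i(\Box_i A\lor B)$ as $\Box_i A\lor\Box_i B$ and recurses on $B$, and the validity of the rules must be checked in the multi-agent setting. For that check we would use Lemma~\ref{modal agreement lemma multi-agent}, which shows that formulas headed by $\Box_i$ or $\Diamond_i$ are constant along $R_i$. This is exactly what justifies pulling them out of the scope of $\Box_i$.
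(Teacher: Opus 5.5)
Your proof is correct, but it takes a different route from the paper's.

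\textbf{The paper's route.} The paper keeps the $i$-DNF itself as the induction invariant. Conjunction is handled by distributing. The case $\Diamond_i\psi$ is handled by pushing $\Diamond_i$ through the disjunction and extracting the $\Box_i\Diamond_i$-part with rules 5--6 of Lemma~\ref{reduction rules for K45}. The case $\Box_i\psi$ is handled by simply invoking Lemma~\ref{DNF of Box Multi-agent}, the subset-indexed equivalence $\Box_i\varphi\equiv\bigvee_{S\subseteq K}[\bigwedge_{k\in S}\delta_k^{\Box_i\Diamond_i}\land\Box_i(\bigvee_{k\in S}\delta_k^{\alpha})]$, which is proved semantically via modal agreement.

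\textbf{Your route.} You carry a weaker invariant that is closed under Boolean operations: Boolean combinations of $i$-objective formulas and of $\Box_i$ applied to such combinations. You handle the $\Box_i$ step syntactically, using CNF, distribution of $\Box_i$ over $\land$, and rules 7--8. All shaping into the form $\alpha\land\Box_i\beta\land\Diamond_i\gamma$ is postponed to one final propositional DNF step.

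\textbf{What each route buys.} Yours is self-contained, needing only the reduction rules. It also treats negation, a primitive connective, for free. The paper's induction has no $\lnot$ case at all and implicitly relies on negation normal form; pushing $\lnot$ through an $i$-DNF would otherwise need a small extra argument. The paper's route is more economical, because Lemma~\ref{DNF of Box Multi-agent} is needed anyway for condition (4) of Theorem~\ref{Multi-agent unknowability}. Its $\Box_i$ case also directly produces the subset-indexed shape used there.

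\textbf{Two minor remarks.} First, the sub-induction you flag as the main obstacle is routine. This includes a clause with no $i$-objective disjunct, which just leaves $\Box_i\bot$. Second, your final CNF/DNF splitting of $\beta$ and $\gamma$ is harmless but unnecessary. Any Boolean combination of $i$-objective formulas is itself $i$-objective, so it is already a one-term disjunction or conjunction, as the definition of $i$-DNF requires.
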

\begin{proof}
    We induct on formulas $\varphi$. If $\varphi$ is a propositional formula, we are done. If $\varphi=\psi_1\lor\psi_2$, then $\psi_1$ and $\psi_2$ are equivalent to some $i$-DNFs by inductive hypothesis (IH), and their disjunctions are also in $i$-DNF. If $\varphi=\psi_1\land\psi_2$, then by IH, $\psi_1$, $\psi_2$ are equivalent to some $i$-DNFs $\bigvee_k\delta_k$, $\bigvee_l \delta_l'$, respectively. Thus, $\varphi\equiv(\bigvee_k\delta_k\land\bigvee_l \delta_l')\equiv \bigvee_{k,l}(\delta_k\land\delta_l')$. Since $\delta_k=\delta_k^\alpha\land\delta_k^{\Box_i\Diamond_i}$ and $\delta_l'=\delta_l'^\alpha\land\delta_l'^{\Box_i\Diamond_i}$, we have $\varphi\equiv\bigvee_{k,l}[(\delta_k^\alpha\land\delta_l'^\alpha)\land(\delta_k^{\Box_i\Diamond_i}\land \delta_l'^{\Box_i\Diamond_i})]$. This is in $i$-DNF. If $\varphi=\Box_j\psi$ for some $j\neq i$, we are done. If $\varphi=\Diamond_i\psi$, $\psi$ is equivalent to some $i$-DNF $\bigvee_k \delta_k$. Then, we have $\varphi\equiv\Diamond_i\bigvee_k(\delta_k^\alpha\land \delta_k^{\Box_i\Diamond_i})\equiv\bigvee_k\Diamond_i(\delta_k^\alpha\land \delta_k^{\Box_i\Diamond_i})\equiv\bigvee_k(\Diamond_i\delta_k^\alpha\land \delta_k^{\Box_i\Diamond_i})$. Here, we applied the reduction rules for K45. The rightmost hand becomes $i$-DNF when ordered properly. If $\varphi=\Box_i\psi$, $\psi$ is equivalent to some $i$-DNF $\bigvee_k\delta_k$ by IH. Then, by Lemma~\ref{DNF of Box Multi-agent}, $\varphi\equiv\bigvee_{S\subseteq K}\left[\bigwedge_{k\in S}\delta_k^{\Box_i\Diamond_i}\land\Box_i(\bigvee_{k\in S}\delta_k^{\alpha})\right]$, which is in $i$-DNF.
\end{proof}
Now, we state the multi-agent version of Theorem~\ref{Unknowability and unbelievability in KD45 and S5} for the static notions only. Unlike the single agent case, not all formulas in $B_S$ and $\Gamma_S$ can be purely propositional due to their interactive nature: for example in KD45, $\Box\Box$ collapses to $\Box$ in the single agent case while $\Box_1\Box_2$ does not. However, they can be $i$-objective, meaning that in the multi-agent case, Moorean phenomena occur through objective facts and beliefs of other agents.
\begin{theorem}[Unknowability and unbelievability in KD45 and S5 in the multi-agent case]\label{Multi-agent unknowability}
    Let $\bigvee_{k\in K}\delta_k$ be a disjunctive normal form of $\varphi$ for agent $i$ in K45 and for each $S\subseteq K$, let $B_S$, $\Gamma_S$ be the sets of formulas $\beta$, $\gamma$ such that $\Box_i\beta$, $\Diamond_i\gamma$ are in the formula $\bigwedge_{k\in S}\delta_k^{\Box_i\Diamond_i}$, respectively. Then in KD45 and S5, the following are equivalent for all $i\in G$:
    \begin{enumerate}
        \item $\varphi$ is unbelievable to $i$
        \item $\varphi$ is unknowable to $i$
        \item $\varphi\leftrightarrow (\varphi\land\lnot\Box_i\varphi)$ is valid.
        \item $\bigvee_{S\subseteq K}\left[\bigwedge_{k\in S}\delta_k^{\Box_i\Diamond_i}\land\Box_i(\bigvee_{k\in S}\delta_k^{\alpha})\right]$ is unsatisfiable.
    \end{enumerate}
        In particular, if the formula $\bigvee_{k\in S}\delta_k^\alpha$ and all the formulas in $B_S$ and $\Gamma_S$ are $i$-independent for all $S\subseteq K$, we can simplify condition (4) to:
    \begin{enumerate}
        \item[4'.] For any $S\subseteq K$, (i) $B_S\cup\{\bigvee_{k\in S}\delta_k^{\alpha}\}$ is unsatisfiable or (ii) there exists a $\gamma\in \Gamma_S$ such that $B_S\cup\{\bigvee_{k\in S}\delta_k^{\alpha}\}\cup\{\gamma\}$ is unsatisfiable.
    \end{enumerate}
\end{theorem}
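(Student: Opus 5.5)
The plan is to mirror the single-agent proof of Theorem~\ref{Unknowability and unbelievability in KD45 and S5}, replacing each single-agent lemma by its multi-agent counterpart.

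\textbf{Step 1: $(1)\Leftrightarrow(2)\Leftrightarrow(3)$.} I rerun the proof of Lemma~\ref{fixed point} with $\Box_i$ in place of $\Box$. The equivalence $(2)\Leftrightarrow(3)$ is purely propositional: $\varphi\land\Box_i\varphi$ is unsatisfiable iff $\varphi\to\lnot\Box_i\varphi$ is valid. For $(1)\Rightarrow(2)$ nothing needs checking, since unsatisfiability of $\Box_i\varphi$ immediately gives unsatisfiability of $\varphi\land\Box_i\varphi$. For $(2)\Rightarrow(1)$ I argue by contraposition. Suppose $M,w\models\Box_i\varphi$. Seriality of $R_i$ gives some $v$ with $wR_iv$, and then $M,v\models\varphi$. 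Lemma~\ref{modal agreement lemma multi-agent}, applied to the formula $\Box_i\varphi$, gives $M,v\models\Box_i\varphi$. So $\varphi$ is knowable to $i$. Only seriality, transitivity and Euclideanness of the single relation $R_i$ are used, so the other agents play no role.

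\textbf{Step 2: $(1)\Leftrightarrow(4)$.} By the preceding lemma, $\varphi$ has an $i$-DNF $\bigvee_{k\in K}\delta_k$. Lemma~\ref{DNF of Box Multi-agent} then says that $\Box_i\varphi$ is logically equivalent to the displayed disjunction. Hence $\Box_i\varphi$ is unsatisfiable exactly when that disjunction is unsatisfiable, which is $(4)$. I will take Lemmas~\ref{modal agreement lemma multi-agent} and~\ref{DNF of Box Multi-agent} as given. If a justification is wanted, it is the single-agent proof verbatim: the only fact used there is that $\delta_k^{\Box_i\Diamond_i}$ is a conjunction of $\Box_i$- and $\Diamond_i$-formulas, so Lemma~\ref{modal agreement lemma multi-agent} transfers its truth value along $R_i$.

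\textbf{Step 3: $(4)\Leftrightarrow(4')$ under $i$-independence.} A disjunction is unsatisfiable iff each disjunct is. So $(4)$ holds iff, for every $S\subseteq K$, the formula
\[\Box_i\Bigl(\textstyle\bigwedge B_S\land\bigvee_{k\in S}\delta_k^\alpha\Bigr)\land\bigwedge_{\gamma\in\Gamma_S}\Diamond_i\gamma\]
is unsatisfiable. Here I merge the conjuncts $\Box_i\beta$ for $\beta\in B_S$ with $\Box_i(\bigvee_{k\in S}\delta_k^\alpha)$, which is allowed because $\Box_i$ distributes over $\land$ in K. Under the hypothesis, $\beta^*:=\bigwedge B_S\land\bigvee_{k\in S}\delta_k^\alpha$ and every $\gamma\in\Gamma_S$ are $i$-independent. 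Lemma~\ref{multi-agent satisfiability lem} then says the displayed formula is satisfiable iff $\beta^*$ is satisfiable and $\beta^*\land\gamma$ is satisfiable for every $\gamma\in\Gamma_S$. Negating this gives exactly clauses (i) or (ii) of $(4')$, since satisfiability of $\beta^*$ means satisfiability of the set $B_S\cup\{\bigvee_{k\in S}\delta_k^\alpha\}$.

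The main obstacle is Step 3. Lemma~\ref{multi-agent satisfiability lem} needs $i$-independence, not mere $i$-objectivity. The reason is that an $i$-objective formula such as $\Box_j\Diamond_i p$ may have its truth value changed when $R_i$ is rebuilt around $w^*$. This is why the simplification is stated only under the extra hypothesis. I would also double-check that the construction in Lemma~\ref{multi-agent satisfiability lem} really yields a KD45 (resp.\ S5) model; in particular, that each $R_j^*$ for $j\ne i$ remains KD45 at $w^*$, which holds because of the added reflexive loop $(w^*,w^*)$. If that construction were doubtful, I would replace it by a disjoint union of the witnessing models together with a fresh root $w^*$ whose $R_i$-successors form an $R_i$-cluster.
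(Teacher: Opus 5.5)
Your proposal is correct and follows the paper's own (very terse) proof: you rerun Lemma~\ref{fixed point} with $\Box_i$ via Lemma~\ref{modal agreement lemma multi-agent}, get $(1)\Leftrightarrow(4)$ from Lemma~\ref{DNF of Box Multi-agent}, and get $(4)\Leftrightarrow(4')$ from Lemma~\ref{multi-agent satisfiability lem} after merging the $\Box_i$-conjuncts, which is exactly the paper's route. On your double-check for S5, neither the paper's fresh root $w^*$ nor your fallback root is $R_i$-reflexive; a construction that works for both KD45 and S5 is to drop $w^*$, set $R_i^*=(S\times S)\cup\{(x,x)\colon x\in W\setminus S\}$, and evaluate at $s_\beta$.
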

\begin{proof}
    The proofs are essentially the same as Theorem~\ref{Unknowability and unbelievability in KD45 and S5}. The equivalence $(4)\Leftrightarrow (4')$ in the ``In particular'' part follows from Lemma~\ref{multi-agent satisfiability lem}.
\end{proof}
The following equivalence is expected to hold in S5 for multi-agent epistemic logic with common knowledge operator $C_G$, with almost the same proof in Proposition 5.7 in~\cite{holliday2010}. We just leave it for future work: $\varphi$ is always informative $\iff$ $C_G\varphi$ is unsatisfiable $\iff$ $\varphi$ is eventually self-refuting.

Several new mechanisms for Moorean phenomena arise in the multi-agent case.
\begin{example}
    $\varphi:=\Box_1 p\land\Box_2 \lnot p$ is unbelievable to both 1 and 2 in $S5$, simply because it is unsatisfiable (agent 1 knows $p$ while agent 2 knows $\lnot p$). $\psi:=\Box_1\Diamond_2 p\land\lnot p$ is unbelievable to 2 in S5 since it reduces to the Moore sentence $\Diamond_2 p\land\lnot p$ due to axiom T. However, $\psi$ is clearly believable to 2 in KD45. $\chi:=\Box_1 p\land \Diamond_2\Diamond_1\lnot p$ is unbelievable to 2 in KD45 since it is equivalent to the generalized Moore sentence $(\Box_1 p)\land\Diamond_2 \lnot (\Box_1 p)$ for 2. It is easy to see that $\psi$ and $\chi$ are believable to 1 in KD45.
\end{example}
\section{Is the BK paradox a Moorean phenomenon?}
Before concluding, we briefly analyze the Brandenburger-Keisler paradox in epistemic game theory as an example of an unbelievable formula. The paradox arises from considering the following sentence, with the understanding that ``assumption'' means ``strongest belief'':
\begin{quote}Ann believes that Bob assumes that Ann believes that Bob's assumption is wrong.
\end{quote}
According to \cite{brandenburger2006}, this sentence is a Liar-style paradox since it involves the semantic notion of ``wrong'' while the formal version expressed by modal logic is a Russell-style paradox.  In this section, we briefly introduce the BK paradox and discuss whether it can be considered a Moorean phenomenon. The notion of assumption appears in discussions of the relationship between common assumption of rationality and elimination of weakly dominated strategies in game theory. For details of the paradox, see the original paper \cite{brandenburger2006}.

\begin{definition}
    Define our formulas by
    \begin{equation*}
        \varphi:=p\mid \lnot\varphi\mid\varphi\land\psi\mid\Box_i\varphi\mid\boxplus_i\varphi\quad (p\in\mathbf{Prop} \text{ and } i\in G)
    \end{equation*}
\end{definition}
We define the truth of the assumption operator as:
\begin{itemize}
\item $M,w\models\boxplus_i\varphi$ iff for all $v\in W$, $wR_iv\Longleftrightarrow M,v\models\varphi$.
\end{itemize}
    The two-person BK paradox is formalized as in Proposition~\ref{two person BK paradox}. Here, the informal statement ``at state $w$, Ann believes that Bob's assumption (about $w$) is wrong'' is formalized as $M,w\models D$. This is because $M,w\models D$ implies $\forall v\,[w R_a v\rightarrow \lnot vR_bw]$, meaning that at any state $v$ Ann considers possible from $w$, Bob considers $w$ impossible (even though Ann is actually at $w$).\footnote{The property $\forall v\,[w R_a v\rightarrow \lnot vR_bw]$ is not expressible in the usual modal logic. However, it can be expressed in hybrid logic, as suggested in \cite{pacuit2007}.}
\begin{proposition}[The BK paradox]\label{two person BK paradox}
    Let $M=(W, R^a, R^b, V)$ be a model such that $R_a$, $R_b$ are serial and $V(D)=\{w\in W\colon\forall v\,(wR_a v\to \lnot vR_bw)\}$. Then there is no state $w\in W$ such that $M,w\models \Box_a\boxplus_b D$.
\end{proposition}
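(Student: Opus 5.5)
The plan is a direct reductio, using only the semantics of $\boxplus_b$ and seriality; no earlier lemma about KD45 or S5 is needed, which fits the claim that the paradox already arises in KD. Suppose $M,w\models\Box_a\boxplus_b D$. First I use seriality of $R_a$ to pick some $v$ with $wR_av$. Then $M,v\models\boxplus_b D$, that is,
\begin{equation*}
\forall u\in W\,\bigl(vR_bu\Longleftrightarrow M,u\models D\bigr).
\end{equation*}
In words, the states Bob considers possible at $v$ are exactly the $D$-states.

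Next I test this biconditional at $u:=w$ and split into two cases according to whether $M,w\models D$.

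\emph{Case 1: $M,w\models D$.} By the definition of $V(D)$, every $a$-successor $x$ of $w$ satisfies $\lnot xR_bw$. In particular, since $wR_av$, we get $\lnot vR_bw$. But the biconditional at $v$, with $u=w$ and $M,w\models D$, gives $vR_bw$. This is a contradiction.

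\emph{Case 2: $M,w\not\models D$.} The biconditional gives $\lnot vR_bw$. This alone does not immediately contradict anything, because $\lnot D$ at $w$ only says that \emph{some} $a$-successor $x$ of $w$ has $xR_bw$, and that $x$ need not be $v$. Here I use that $v$ was an arbitrary $a$-successor: every $x$ with $wR_ax$ satisfies $\boxplus_b D$, so the same reasoning gives $\lnot xR_bw$ for all such $x$. By definition this is exactly $M,w\models D$, which contradicts the case hypothesis.

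So both cases are impossible, and no $w$ satisfies $\Box_a\boxplus_b D$. The only subtle step is Case 2. The key point is to quantify over all $a$-successors rather than fixing one, so that $\lnot D$ at $w$ can be refuted; seriality of $R_a$ is used only to obtain the witness $v$ in Case 1. Seriality of $R_b$ appears not to be needed, and I will note this.
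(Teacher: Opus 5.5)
Your proof is correct and essentially the paper's argument: the paper proves the $n$-agent generalization (Proposition~\ref{Multi-agent BK paradox}), which specializes to yours at $n=2$ as a case split on whether $M,w\models D$, with seriality of $R_a$ used only in the $D$ case. The only difference is cosmetic: in the $\lnot D$ case the paper takes the witness $x$ with $wR_ax$ and $xR_bw$ supplied by $\lnot D$ and applies $\boxplus_b D$ at $x$ to get $M,w\models D$, which is the contrapositive of your universal argument (and, like yours, it never uses seriality of $R_b$).
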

We state in our language the multi-agent generalization of the paradox in \cite{abramsky2012}.
\begin{proposition}[The multi-agent version of the BK paradox]\label{Multi-agent BK paradox}
        Let $M=(W, \{R_i\}_{i=1}^n, V)$ be a model such that each $R_i$ is serial and $V(D)=\{w_1\in W\colon\forall w_2\cdots \forall w_n\,[(w_1R_1w_2\land\cdots \land w_{n-1}R_{n-1}w_n)\to\lnot w_nR_nw_1]\}$. Then, there is no $w_1\in W$ such that 
    \begin{equation*}
        M,w_1\models \Box_1\cdots\Box_{n-1}\boxplus_n D
    \end{equation*}
\end{proposition}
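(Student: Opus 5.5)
We argue by contradiction, in the style of the two-person BK paradox (Proposition~\ref{two person BK paradox}): we follow one cycle of accessibility arrows $w_1 R_1 w_2 R_2 \cdots R_{n-1} w_n$ and close it with an $R_n$-arrow back to the start. Suppose $M,w_1\models \Box_1\cdots\Box_{n-1}\boxplus_n D$. By seriality of $R_1,\ldots,R_{n-1}$ we choose states $w_2,\ldots,w_n$ with $w_1R_1w_2,\ \ldots,\ w_{n-1}R_{n-1}w_n$. Unfolding the boxes along this chain gives $M,w_n\models\boxplus_n D$. By the truth condition of the assumption operator, for every $v\in W$ we then have $w_nR_nv$ iff $M,v\models D$, i.e.\ iff $v\in V(D)$. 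We split into two cases according to whether $w_1\in V(D)$.

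\emph{Case $w_1\in V(D)$.} The biconditional gives $w_nR_nw_1$. But $w_1\in V(D)$ means that every chain $w_1R_1w_2'\cdots R_{n-1}w_n'$ satisfies $\lnot w_n'R_nw_1$. Our chosen chain is such a chain, so $\lnot w_nR_nw_1$, a contradiction.

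\emph{Case $w_1\notin V(D)$.} By definition of $V(D)$ there is a chain $w_1R_1w_2'\cdots R_{n-1}w_n'$ with $w_n'R_nw_1$. Along this chain the boxes unfold again, so $M,w_n'\models\boxplus_n D$. Applying the assumption biconditional at $w_n'$ to $v=w_1$ yields $w_1\in V(D)$, a contradiction.

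The main subtlety is the second case. The chain witnessing $w_1\notin V(D)$ is generally different from the one built by seriality, so we must re-derive $\boxplus_n D$ at its endpoint $w_n'$; this works because the prefix $\Box_1\cdots\Box_{n-1}$ quantifies over \emph{all} such chains. Having made this observation, we can simplify: case~1 only needs some chain, and case~2 supplies its own. We therefore use seriality solely to guarantee that a chain exists in case~1. The whole argument is the Russell-style diagonalization on the question ``$w_1\in V(D)$?''.
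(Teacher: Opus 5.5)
Your proof is correct and is essentially the paper's argument. Both split on whether $M,w_1\models D$, using seriality of $R_1,\ldots,R_{n-1}$ to get a chain in the first case and the chain witnessing $\lnot D$ in the second, then unfold $\Box_1\cdots\Box_{n-1}$ to obtain $\boxplus_n D$ at the endpoint and apply it to $w_1$. Your explicit remark that the second case must re-derive $\boxplus_n D$ along the witnessing chain spells out a step the paper leaves implicit.
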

\begin{proof}
To derive a contradiction, suppose that there is an $w_1\in W$ which satisfies the formula above. 
We ask whether $M,w_1 \models D$.
Suppose $M,w_1 \models D$. By seriality, there are $w_2,\dots,w_n \in W$ such that $w_1 R_1 w_2 \wedge \dots \wedge w_{n-1} R_{n-1} w_n$. Since $M,w_1 \models D$, we have $\neg w_n R_n w_1$. But by $M,w_1 \models \Box_1 \dots \Box_{n-1} \boxplus_n D$, we have $w_n R_n w_1 \iff M,w_1 \models D$, a contradiction.
Suppose $M,w_1 \models \neg D$. Then there are $w_2, \dots, w_n \in W$ such that $w_1 R_1 w_2 \wedge \dots \wedge w_{n-1} R_{n-1} w_n$ and $w_n R_n w_1$. By $M,w_1 \models \Box_1 \dots \Box_{n-1} \boxplus_n D$, we have $w_n R_n w_1 \iff M,w_1 \models D$, a contradiction.
\end{proof}
Now, let us return to the question of whether the BK paradox is considered a Moorean phenomenon. Since $\varphi:=\boxplus_b D$ also satisfies $\varphi\leftrightarrow(\varphi\land\lnot\Box_a\varphi)$, $\varphi$ still manifests a Moorean phenomenon for Ann in the \emph{generalized sense} discussed after Theorem~\ref{Unknowability in K and KD}. However, $\varphi$ is already unbelievable to Ann due to a mechanism distinct from introspections, which are required for the unbelievability of the Moore sentence. In fact, as we saw in the proof of Proposition~\ref{Multi-agent BK paradox}, the unbelievability of $\varphi$ stems from diagonalization akin to Russell's paradox. This discussion also applies to the $n=1$ case with $D=\{w\in W\colon \forall v\,[wRv\to\lnot vRw]\}$, although $\boxplus D$ is already unsatisfiable in KD.
\section{Conclusion and future directions}
In this work, we proposed static definitions of unknowability and unbelievability. We demonstrated that Moorean phenomena are best captured by contradictions among propositions that an agent already believed/considered possible, and the proposition he is now trying to believe. Our main result showed that in S5, the static notions and the dynamic notions are all equivalent to Moorean phenomena, and we further extended it to the multi-agent case for the static notions. We also analyzed the BK paradox as an example of a formula that is unbelievable due to a distinct mechanism although it is Moorean in the generalized sense.

Possible future directions are as follows. First, there might be formulas other than the BK paradox that are already unbelievable in KD or KD5. Finding such formulas would reveal further distinct mechanisms for unknowability and unbelievability. Second, we leave the discussions of whether the equivalence in Theorems~\ref{Unknowability and unbelievability in KD45 and S5} and \ref{Multi-agent unknowability} holds in other classes of frames as a future work. A generalization of Theorem~\ref{Multi-agent unknowability} for the dynamic notions and adding common knowledge and distributed knowledge operators is also left. Finally, incorporating awareness might yield new mechanisms.       
\begin{credits}
\subsubsection{\ackname}
I would like to thank Google DeepMind's Gemini 2.5 for discussions. I also thank Koki Okura, who kindly listened to my research progress in a weekly seminar. 
\subsubsection{\discintname}
The author has no competing interests to declare that are
relevant to the content of this article.
\end{credits}

\end{document}